\documentclass[12pt,titlepage]{article}

\usepackage{geometry}
\usepackage{mathtools,amssymb,amsfonts,amsthm,amsmath,amscd}

\usepackage{enumitem}
\usepackage{bm}

\usepackage{pifont}
\usepackage{fontawesome}
\usepackage{wasysym}

\usepackage[utf8]{inputenc} %Décochez si vous utilisez un encodage utf-8
\usepackage[T1]{fontenc}

\usepackage[british]{babel}

\usepackage[authoryear]{natbib}

\usepackage{graphicx}

\usepackage{booktabs,lipsum} 
\usepackage{multirow}

\usepackage{lineno}
\usepackage{ulem}

\usepackage{titlesec}
\titleformat{\chapter}[display]
  {\normalfont\huge\bfseries}
  {\chaptertitlename\ \thechapter}{20pt}{\huge}

\usepackage{hyperref}
\hypersetup{
    colorlinks=true,   % false: liens encadrés; true: liens colorés
    linkcolor=black,   % couleur des liens (ou bordures) internes
    citecolor=magenta, % couleur des liens (ou bordures) vers bilbio
    filecolor=green,   % couleur des liens (ou bordures) vers fichiers
    urlcolor=red       % couleur des liens (ou bordures) url
}

\usepackage{color}
\usepackage[all]{xy}
\usepackage[colorinlistoftodos]{todonotes}
\usepackage{tikz-cd}
\usepackage{systeme}

\usepackage{url}
\usepackage{IEEEtrantools}

\usepackage[T1]{fontenc}

\usepackage[cal=pxtx,frak=euler,scr=boondox,bb= pazo]{mathalfa}

\usepackage{orcidlink}

\newtheorem{Lemma}{Lemma}[section] %numbering by section
\newtheorem{Remark}{Remark}[section] %numbering by section
\newtheorem{Theorem}{Theorem}[section] %numbering by section
\newtheorem*{Assumptions}{Assumptions} %numbering by section

\DeclareMathOperator*{\I}{\text{I}}
\DeclareMathOperator*{\Proba}{\text{P}}

\newcommand{\C}{{\cal C}}

\newcommand{\bfor}{\begin{eqnarray*}}
\newcommand{\efor}{\end{eqnarray*}}

\begin{document}

%\linenumbers

%%%%%%%%%%%%%%%%%%%%%%%%%%%%%%%%%%%%%%%%%%%%%%%%%%%%%%%%
% TITLE PAGE
%%%%%%%%%%%%%%%%%%%%%%%%%%%%%%%%%%%%%%%%%%%%%%%%%%%%%%%%

\title{\bf Copula based dependent censoring in cure models with covariates}
\author{Morine Delhelle$^{\,a\,\orcidlink{0009-0008-4032-4233}}$, Anouar El Ghouch$^{\,a}$ and Ingrid Van Keilegom$^{\,a,b\,\orcidlink{0000-0001-8827-7642}}$ \footnote{morine.delhelle@uclouvain.be, anouar.elghouch@uclouvain.be, ingrid.vankeilegom@kuleuven.be.}\\[5mm]
$^a$ Institute of Statistics, Biostatistics and Actuarial Sciences (ISBA), UCLouvain, Belgium\\[3mm]
$^b$ Research Centre for Operations Research and Statistics (ORSTAT), KU Leuven, Belgium}
\date{}
\maketitle

\begin{abstract}
In survival analysis, the time-to-event variable $T$ is frequently subject to right censoring. Individuals may withdraw from the study for various reasons, or may not experience the event of interest before the end of follow-up. In this paper, we distinguish between two types of censoring: a potentially dependent censoring time $C$, which may be stochastically related to $T$, and an independent administrative censoring time $A$. In addition, the data may exhibit a cure fraction, meaning that some individuals will never experience the event. We build upon a recent work about a fully parametric mixture cure model, which accounts for dependent censoring through copulas. The proposed extension incorporates administrative censoring and allows covariates to affect all model parameters. This framework enables a more accurate modelling of the dependence between survival and censoring times while providing greater flexibility through covariate effects, leading to more individualised estimation of the cure fraction, the dependence structure, and other clinically relevant quantities. Moreover, the presence of covariates allows for weaker identification conditions.
\\[5mm]
\textit{Keywords: Cure models; Dependent censoring; Administrative censoring; Covariates; Identifiability; Inference; Survival analysis.} 
\end{abstract}

%%%%%%%%%%%%%%%%%%%%%%%%%%%%%%%%%%%%%%%%%%%%%%%%%%%%%%%%%%%%%%%%%%%%%
\section{Introduction} \label{sect1Intro}
%%%%%%%%%%%%%%%%%%%%%%%%%%%%%%%%%%%%%%%%%%%%%%%%%%%%%%%%%%%%%%%%%%%%%

Survival analysis is a cornerstone of biomedical research, where the objective is to model the time until the occurrence of an event of interest such as death, disease recurrence, or treatment failure. Beyond medicine, survival analysis also plays an important role in many other fields including demography (e.g., time to first childbirth) and engineering (e.g., time until machine failure).

In practice, the time-to-event variable $T$ is frequently only partially observed because of \textit{right censoring}. This occurs when individuals withdraw from a study or remain event-free at the end of follow-up. Traditional approaches, such as the Cox proportional hazards model \cite{Cox}, generally rely on the assumption of independent censoring, meaning that the censoring time carries no information about the unobserved event time. However, this assumption is often unrealistic in practice. For example, in oncology trials, patients may withdraw from a study following a substantial improvement in their health status, which is directly related to the event of interest (e.g., cancer progression). Ignoring such dependent censoring may result in biased estimates of survival probabilities, hazard ratios, and other clinically relevant quantities (\cite{LiAl}, \cite{RondeauAl2011}, \cite{EmuraChen2018}). Consequently, a variety of methods have been proposed to model the stochastic dependence between survival and censoring times. These approaches may be broadly divided into two main categories. The first comprises fully specified copula-based models, such as \cite{RivestWells} and \cite{EmotoMatthews}. The second includes approaches in which the dependence structure is estimated from the data, such as the parametric copula model with an unknown dependence parameter proposed by \cite{CzadoVanKeilegom}, and the semiparametric framework of \cite{DeresaVK_2024}, which combines an unknown parametric copula with a Cox model. An overview of recent advances in methods using copulas is provided by \cite{CrommenAl_2025}.

As explained above, right censoring time is often stochastically dependent on the survival time we are interested in; however, this is not always the case. The most common example is that of \textit{administrative censoring}, which arises from study termination and is considered as a non-informative censoring mechanism. Therefore, it is useful to be able to take these two types of censoring into account separately when modelling.

An additional layer of complexity arises when the data exhibit a \textit{cure fraction}, meaning that a proportion of individuals will never experience the event of interest (e.g., patients who never relapse following treatment). \textit{Cure models} describe the population survival function through the \textit{incidence}, defined as the probability of being uncured, and the \textit{latency}, defined as the survival function of uncured individuals. Comprehensive discussions of cure models can be found in \cite{MallerZhou_1996}, \cite{PengYu_2022}, and Chapter 4 of the monograph by \cite{Legrand2021}. The seminal \textit{mixture cure model} (\cite{BerksonGage}) is a formalisation of this framework that has been extensively studied (e.g., \cite{PengTaylor_2014} and \cite{AmicoVK}) and consists in distinguishing between a subpopulation of susceptible individuals (who will experience the event) and a subpopulation of non-susceptibles (who are event free).

Despite their practical relevance, the joint treatment of dependent censoring and cure fractions remains relatively underdeveloped and existing approaches often rely on restrictive assumptions. For example, the parametric model of \cite{DeresaVKAntonio} accommodates dependent censoring and covariate effects but does not account for a cure fraction, \cite{LiAl} assumes the copula to be known, \cite{Bernhardt} considers a known cure threshold, \cite{SchneiderDemarquidFC} is limited to clustered data, \cite{OthusAl} requires the presence of covariates and for some of them the model identifiability is not established (e.g., \cite{Bernhardt}, \cite{LiuAl}). More recently, \cite{DelhelleVK} proposed a copula-based framework for modelling dependent censoring in cure models. However, their approach neither incorporated covariates, which are essential in many practical applications, nor accounted for administrative censoring. Furthermore, their framework relied on the assumption that the latency distribution has bounded support.

In this paper, we extend the model proposed by \cite{DelhelleVK}. The proposed framework explicitly distinguishes administrative censoring from potentially dependent censoring. In addition, it allows \textit{covariates} to affect all model parameters. This extension provides a more flexible and realistic framework for analysing survival data, enabling more accurate modelling of the dependence between survival and censoring times while allowing individualised estimation of the cure fraction, the dependence structure, and other clinically relevant quantities.

The remainder of the paper is organised as follows. Section \ref{sect2Mod} presents the proposed model. Section \ref{sect3IdentParamEst} establishes its identifiability and describes the parameter estimation procedure. The finite-sample performance and robustness of the method are evaluated through simulation studies in Section \ref{sect4Simus}. Section \ref{sect5Data} illustrates the practical usefulness of the proposed approach through the analysis of a real dataset. Finally, Section \ref{sect6Conclu} concludes the paper.

%%%%%%%%%%%%%%%%%%%%%%%%%%%%%%%%%%%%%%%%%%%%%%%%%%%%%%%%%%%%%%%%%%%%%
\section{The model} \label{sect2Mod}
%%%%%%%%%%%%%%%%%%%%%%%%%%%%%%%%%%%%%%%%%%%%%%%%%%%%%%%%%%%%%%%%%%%%%

Let $T$, $C$ and $A$ be three non-negative continuous random variables, representing the survival time, the dependent censoring time, and the administrative censoring times, respectively. In the context of right censored data, we observe $(Y, \Delta_T, \Delta_C)$, with $Y=\min\left(T, C, A\right)$ and the indicators $\Delta_T=\I\left(Y=T\right)$ and $\Delta_C=\I\left(Y=C\right)$. To each individual a $m-$dimensional vector of covariates $X$ is associated. In the framework considered, a cure fraction is assumed, meaning that some individuals will never experience the event of interest, and their survival time is therefore considered infinite. The positive probability that the event does not occur, called the cure rate, is denoted by $1-\pi(x)$ where $\pi(x)=\Proba\left(T<\infty|X=x\right)$ is the incidence of the model. The random variable corresponding to the survival time of uncured individuals is denoted by $U$. %Note that if we define $B=\I\left(T|X < \infty\right)$, then $T|X =U|X B+\infty\left(1-B\right)$, where $U|X$ is the survival time for the uncured individuals. 

Let $S_{T|X}(t|x)=\Proba\left(T>t|X=x\right)$ denote the improper conditional survival function of $T|X$. According to the mixture cure model, we have
\begin{equation}
S_{T|X}(t|x)=1-\pi(x)+\pi(x)S_{U|X}(t|x), \label{eq:modelmixture}
\end{equation}
where $S_{U|X}(t|x)=\Proba\left(T>t|T<\infty, X=x\right)$ is the proper survival function for uncured individuals, referred to as the latency component of the model. With the cumulative distribution functions $F_{T|X}= 1-S_{T|X}$ and $F_{U|X}=1-S_{U|X}$, the model can also be written as $F_{T|X}(t|x)=\pi(x)F_{U|X}(t|x)$.\\
An important component of the model, which distinguishes it from standard approaches in the literature, is the stochastic dependence between $T$ and $C$.
As in \cite{DelhelleVK}, this dependency is modelled by writing the bivariate distribution of $(T,C)|X$ as a copula of the marginal distributions:
\begin{equation}
F_{T,C|X}(t,c|x)=\Proba\left(T\leq t, C\leq c|X=x\right)=\C\left(F_{T|X}(t|x), F_{C|X}(c|x)\big|x\right). \label{eq:modelcopula}
\end{equation}
The copula $\C\left(\cdot, \cdot\big|x\right)$ is a bivariate distribution defined on $[0, 1]^2$ with uniform margins. Further details on copulas can be found in the monograph by \cite{Nelsen}.
Finally, the administrative censoring time $A$, whose support is $[0, \infty)$, is assumed to be independent of both ($T$, $C$) and the covariates $X$, reflecting its non-informative nature. Its density and distribution functions are denoted by $f_A$ and $F_A$, respectively.

The proposed model is fully parametric. The conditional distribution $F_{U|X}(\cdot|x)$, which depends on the covariates, is specified parametrically using a GLM-type formulation for each of its parameters:
$$F_{U|X}^{\beta_U}(u|x)=H_{U}\left(u; \theta_U(x)\right),$$
where $H_{U}$ denotes a known distribution function (e.g., Weibull or log-normal) and $\theta_U(x)=g\left(\beta_{U}^{\top}x\right)$. The function $g$ is a known link function, required to be one-to-one (see Subsection \ref{subsect3Ident}) and chosen to satisfy the constraint implied by the parameter space of $\theta_U$. Note that, depending on the family of the distribution function, there may, of course, be more than one parameter indexing $H$. Note that, for notational convenience, the vector $X$ includes the constant term $1$ in addition to the covariates.

As an illustration, consider $X=(1,X_1, X_2)^\top$ and assume that $U$ follows a Weibull distribution. Since both the shape and scale parameters are strictly positive, exponential link functions can be used. In this case, the shape parameter is given by $\theta_{U1}(x)=\exp(\beta_{U_{0}}+\beta_{U_{1}}x_1+\beta_{U_{2}}x_2)$ and the scale parameter by $\theta_{U2}(x)=\exp(\beta_{U_{3}}+\beta_{U_{4}}x_1+\beta_{U_{5}}x_2)$.\\
The same modelling framework applies to the other components of the model, namely $F_{C|X}(\cdot|x)$, the copula $\C\left(\cdot, \cdot\big|x\right)$ which depends on $x$ via Kendall’s tau $\tau^{K}(x)$, and $\pi(x)$, with corresponding parameter vectors $\beta_C$, $\tau$, and $p$. For $\pi(x)$, one may use a logistic link function, but other choices are also allowed:
$$\pi_p(x)=\frac{\exp{\left(p^\top x\right)}}{1+\exp{\left(p^\top x\right)}}.$$
As for the copula, the link function depends on the admissible range of dependence. For example, when using a Frank copula, a suitable choice could be $\tau^{K}_{\tau}(x)=(\exp(\tau^\top x)-1)/(\exp(\tau^\top x)+1)$. For copulas that allow only positive or only negative dependence, alternative link functions with appropriate ranges must be employed.

The parameter vector $\beta=\left(\tau, p, \beta_U, \beta_C\right)^\top$ gathers all model coefficients, and its dimension depends on the number of covariates and the chosen parametrisation. It includes at least the intercept terms. All coefficients are estimated from the data, whereas the link functions are assumed fixed in advance.

For notational simplicity and generality, the same covariate vector $X$ is used for all model parameters. In practice, however, some covariates may not influence certain components of the model. Although this requires minor modifications, the model can be equivalently formulated using distinct covariate vectors for the copula, the marginal distributions of $U$ and $C$, and the incidence component.

Henceforth, to simplify the notation, we use the following abbreviations whenever this does not impair clarity: $F_{\beta_{U}}(t|x)$ for $F_{U|X}^{\,\beta_{U}}(t|x)$, $F_{\beta_{C}}(c|x)$ for $F_{C|X}^{\,\beta_{C}}(c|x)$, and analogous abbreviations for the corresponding densities. For what follows, we also need to define the partial derivatives of the copula:
$$h_{\tau}^{1}\left(v|u,x\right)=\frac{\partial}{\partial u}\C_{\tau}(u,v|x) \quad \text{and} \quad h_{\tau}^{2}\left(u|v,x\right)=\frac{\partial}{\partial v}\C_{\tau}(u,v|x).$$
The conditional distributions can then be expressed as
$$\Proba\left(T\leq t|C=c,X=x\right) = h_{\tau}^{2}\left(\pi_p(x)F_{\beta_{U}}(t|x)\big|F_{\beta_{C}}(c|x),x\right),$$
and
$$\Proba\left(C\leq c|T=t,X=x\right) = h_{\tau}^{1}\left(F_{\beta_{C}}(c|x)\big|\pi_p(x)F_{\beta_{U}}(t|x),x\right).$$

These expressions will be used in the next section to derive the individual likelihood contributions.

%%%%%%%%%%%%%%%%%%%%%%%%%%%%%%%%%%%%%%%%%%%%%%%%%%%%%%%%%%%%%%%%%%%%%
\section{Identifiability and parameter estimation}\label{sect3IdentParamEst}
%%%%%%%%%%%%%%%%%%%%%%%%%%%%%%%%%%%%%%%%%%%%%%%%%%%%%%%%%%%%%%%%%%%%%

This section addresses the identifiability of the model, a fundamental property that is examined in detail. Parameter estimation is also carried out using the maximum likelihood approach.

%%%%%%%%%%%%%%%%%%%%%
\subsection{Identifiability of the model}\label{subsect3Ident}
%%%%%%%%%%%%%%%%%%%%%

There are three types of contributions to the likelihood, their developments can be found in Appendix B. For an uncensored individual, we have
$$\frac{\partial}{\partial y} \Proba(Y \le y, \Delta_T=1, \Delta_C=0| X=x) = \pi_p(x)f_{\beta_{U}}(y|x) \left[1-h_{\tau}^{1}\left\{F_{\beta_{C}}(y|x)\big|\pi_p(x)F_{\beta_{U}}(y|x),x\right\}\right]\{1-F_A(y)\},$$
for a dependently censored individual, the contribution is
$$\frac{\partial}{\partial y} \Proba(Y \le y, \Delta_T=0, \Delta_C=1| X=x) = f_{\beta_{C}}(y|x) \left[1-h_{\tau}^{2}\left\{\pi_p(x)F_{\beta_{U}}(y|x)\big|F_{\beta_{C}}(y|x),x\right\}\right]\{1-F_A(y)\},$$
and for an administratively censored individual, we have
$$\frac{\partial}{\partial y} \Proba(Y \le y, \Delta_T=0, \Delta_C=0| X=x) = \overline{\C}_{\tau}\left\{1-\pi_p(x)F_{\beta_{U}}(y|x), 1-F_{\beta_{C}}(y|x)\big|x\right\}f_A(y),$$
where $\overline{\C}_{\tau}\left(u, v|x\right)=u+v-1+\C_{\tau}\left(1-u,1-v|x\right)$ denotes the survival copula.
%Additional information on this expression can be found in the articles by \cite{DelhelleVK} and \cite{DeresaVKAntonio} upon which it is based.

By combining these three components, we obtain the individual likelihood contribution for an observation $(y, \delta_T, \delta_C, x)$, which depends on the parameter vector $\beta$ (hereafter, quantities related to the distribution of $A$ are omitted, since administrative censoring is assumed to be non-informative):
\begin{align*}
    l(\beta; y, \delta_T, \delta_C, x) &= \left(\pi_p(x)f_{\beta_{U}}(y|x)\left[1-h_{\tau}^{1}\left\{F_{\beta_{C}}(y|x)|\pi_p(x)F_{\beta_{U}}(y|x),x\right\}\right]\right)^{\delta_{T}}\\
    &\hspace{0.7cm}\times \left(f_{\beta_{C}}(y|x)\left[1-h_{\tau}^{2}\left\{\pi_p(x)F_{\beta_{U}}(y|x)|F_{\beta_{C}}(y|x),x\right\}\right]\right)^{\delta_{C}}\\
    &\hspace{0.7cm}\times \left[\overline{\C}_{\tau}\left\{1-\pi_p(x)F_{\beta_{U}}(y|x), 1-F_{\beta_{C}}(y|x)|x\right\}\right]^{1-\delta_{T}-\delta_{C}}.
\end{align*}

Establishing identifiability of the model amounts to showing that, if $l(\beta; y, \delta_T, \delta_C, x) = l(\tilde\beta; y, \delta_T, \delta_C, x)$ for all $(y, \delta_T, \delta_C, x)$ in the support of $(Y, \Delta_T, \Delta_C, X)$, then necessarily $\beta=\tilde\beta$. The proof of identifiability proceeds along two main lines.

The first is an adaptation of Theorem 1 in \cite{DelhelleVK}. For technical reasons, this result assumes that the support of $U|X$ is bounded to allow certain steps in the identifiability proof, whereas the support of $C|X$ is $[0, \infty)$.

The second line of argument, which constitutes a main contribution of this work, relaxes the truncation assumption in several ways depending on the distributions under consideration. Although the presence of covariates introduces additional complexity and requires careful handling, the information they provide can be leveraged to remove the bounded support assumption in specific settings. A key contribution of this work is thus the derivation of identifiability conditions when the support of $U|X$ is unbounded, thereby broadening the applicability of the model by allowing greater flexibility in the choice of marginal distributions.

The identifiability proof relies on a set of high-level assumptions, combined differently depending on the approach considered. The assumptions are listed below and require the introduction of the cure threshold $\check{\tau}(x)=\inf\{y : F_{\beta_{U}}(y|x)=1\}$.

\begin{Assumptions}
\
\begin{enumerate}[label=A\arabic*.]
    \item $\displaystyle\lim_{y \rightarrow 0}\frac{f_{\beta_{C}}(y|x)}{f_{\tilde\beta_{C}}(y|x)}=1 \hspace{0.25cm}\forall x \Longleftrightarrow \beta_{C}=\tilde\beta_{C} \hspace{0.25cm} \mbox{and} \hspace{0.25cm} \displaystyle\lim_{y \rightarrow \infty}\frac{f_{\beta_{C}}(y|x)}{f_{\tilde\beta_{C}}(y|x)}=1 \hspace{0.25cm}\forall x \Longleftrightarrow \beta_{C}=\tilde\beta_{C}$
    \item $\displaystyle\lim_{y\rightarrow 0} h_{\tau}^{2}\left\{ \pi_p(x)F_{\beta_{U}}(y|x)|F_{\beta_{C}}(y|x),x\right\} = 0 \hspace{0.25cm}\forall x,\beta \hspace{0.25cm}\text{ or }\hspace{0.25cm} \displaystyle\lim_{y \rightarrow \infty} h_{\tau}^{2}\left\{ \pi_p(x)F_{\beta_{U}}(y|x)|F_{\beta_{C}}(y|x),x\right\} = 0 \hspace{0.25cm}\forall x,\beta$
    \item $\displaystyle\lim_{y\rightarrow 0} h_{\tau}^{1}\left\{F_{\beta_{C}}(y|x)|\pi_p(x)F_{\beta_{U}}(y|x),x\right\} = 0 \hspace{0.25cm}\forall x,\beta$
    \item $\displaystyle\lim_{y \rightarrow 0}\frac{f_{\beta_{U}}(y|x)}{f_{\tilde\beta_{U}}(y|x)}=1 \hspace{0.25cm}\forall x \Longleftrightarrow \beta_{U}=\tilde\beta_{U}$
    \item $h_{\tau}^{2}\{\pi_p(x)|F_{\beta_{C}}(y|x),x\}=h_{\tilde\tau}^{2}\{\pi_{\tilde p}(x)|F_{\beta_{C}}(y|x),x\} \hspace{0.25cm} \forall x \text{ and } \forall y > \check{\tau}(x) \Longleftrightarrow p=\tilde p \text{ and } \tau=\tilde\tau$
    \item $\displaystyle\lim_{y \rightarrow 0}\frac{\pi_p(x) f_{\beta_{U}}(y|x)}{\pi_{\tilde p}(x) f_{\tilde\beta_{U}}(y|x)}=1 \hspace{0.25cm}\forall x \Longleftrightarrow \beta_{U}=\tilde\beta_{U} \text{ and } p=\tilde p$
    \item $h_{\tau}^{2}\left\{\pi_p(x)F_{\beta_{U}}(y|x)|F_{\beta_{C}}(y|x),x\right\}=h_{\tilde\tau}^{2}\left\{\pi_{p}(x)F_{\beta_{U}}(y|x)|F_{\beta_{C}}(y|x),x\right\} \hspace{0.25cm}\forall x \text{ and } \forall y \Longleftrightarrow \tau=\tilde\tau$
\end{enumerate}
\end{Assumptions}

Assumptions A1--A5 correspond to covariate-adjusted versions of those in Theorem 1 of \cite{DelhelleVK}, while A6 is equivalent to an assumption in Theorem 1 of \cite{CzadoVanKeilegom}, rewritten in terms of incidence and covariates.

These assumptions are not all required simultaneously. For example, A5 applies only to truncated distributions and is stronger than A7. Conversely, A6 (which is stronger than A4) and A7 are used for certain non-truncated distributions. Appropriate combinations of these assumptions therefore yield identifiability results for both truncated and some non-truncated distributions of $U$, as stated in the following theorem.

\begin{Theorem}[Identification]\label{IdentMod}
    Assume that the link functions are one-to-one and the components of $X$ are linearly independent. If assumptions A1 to A3 are verified and, either A4 and A5, or A6 and A7 are also verified, then the model defined by (\ref{eq:modelmixture})-(\ref{eq:modelcopula}) is identified.
\end{Theorem}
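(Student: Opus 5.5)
We start from equality of the three sub-densities for all $y$ and $x$ in the support. Because $A$ is independent of $(T,C,X)$ and has support $[0,\infty)$, the common factors $1-F_A(y)$ and $f_A(y)$ are positive and can be divided out. This leaves three functional identities in $(y,x)$ between the bracketed likelihood pieces under $\beta$ and under $\tilde\beta$. The plan is to recover the components one at a time from suitable limits, $y\to 0$ or $y\to\infty$. At each step the equality of functions of $x$ is turned into equality of coefficient vectors: injectivity of the link converts equality of $g(\beta^\top x)$ for all $x$ into $\beta^\top x=\tilde\beta^\top x$, and linear independence of the components of $X$ then gives $\beta=\tilde\beta$. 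This last device is the only place the covariate assumptions enter, and it is applied componentwise whenever A1, A4–A7 deliver an identity "for all $x$". (Strictly, A1–A7 already state the conclusion in coefficient form, so this device mainly justifies why the assumptions are reasonable.)

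\textbf{Step 1 (censoring margin).} Take the ratio of the dependently censored contributions:
$$\frac{f_{\beta_C}(y|x)\,[1-h^2_\tau\{\pi_p F_{\beta_U}|F_{\beta_C},x\}]}{f_{\tilde\beta_C}(y|x)\,[1-h^2_{\tilde\tau}\{\pi_{\tilde p} F_{\tilde\beta_U}|F_{\tilde\beta_C},x\}]}=1 .$$
Let $y\to0$ or $y\to\infty$, according to which half of A2 holds. Both bracketed factors tend to $1$, so $f_{\beta_C}/f_{\tilde\beta_C}\to1$ for all $x$, and A1 gives $\beta_C=\tilde\beta_C$.

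\textbf{Step 2 (latency, and possibly incidence).} Take the ratio of the uncensored contributions and let $y\to0$. By A3 the $h^1$ factors tend to $0$, so
$$\frac{\pi_p(x)f_{\beta_U}(y|x)}{\pi_{\tilde p}(x)f_{\tilde\beta_U}(y|x)}\to1 .$$
Under A6 this yields $\beta_U=\tilde\beta_U$ and $p=\tilde p$ directly. Under A4 we need separately that the limit of $f_{\beta_U}/f_{\tilde\beta_U}$ equals 1, and this is the point where the truncated branch differs. Here I would first identify $p$ and $\tau$ from beyond the cure threshold and only then apply A4 (see Step 3).

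\textbf{Step 3 (copula and incidence).} For the A4–A5 branch we use $F_{\beta_U}(y|x)=1$ for $y>\check\tau(x)$. To apply this under both parameter values, one must first show that the thresholds coincide; I would argue this from the uncensored density, which vanishes exactly beyond the threshold. Then the dependently censored identity, with $\beta_C$ already known, reduces beyond the threshold to
$$h^2_\tau\{\pi_p(x)|F_{\beta_C}(y|x),x\}=h^2_{\tilde\tau}\{\pi_{\tilde p}(x)|F_{\beta_C}(y|x),x\},$$
and A5 gives $p=\tilde p$ and $\tau=\tilde\tau$. The ratio from Step 2 then reduces to $f_{\beta_U}/f_{\tilde\beta_U}\to1$, and A4 gives $\beta_U=\tilde\beta_U$. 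For the A6–A7 branch, Steps 1–2 have already fixed $\beta_C,\beta_U,p$. Equality of the dependently censored contributions, after dividing by the now-common $f_{\beta_C}$, gives equality of the $h^2$ terms for all $y,x$, and A7 yields $\tau=\tilde\tau$.

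\textbf{Main obstacle.} I expect the delicate part to be the truncated branch. It requires showing that the cure thresholds $\check\tau(x)$ agree under $\beta$ and $\tilde\beta$, and it requires care with the order of steps, since A4 cannot be used until $p$ is known. To prove that the thresholds agree, I would first try using the support of the uncensored sub-density. This support equals $[0,\check\tau(x)]$ provided $1-h^1$ stays positive there, and the positivity needs a short argument. I would also check that the limits in A2–A3 can be exchanged with the ratio, i.e.\ that the denominators stay nonzero near the limit, which holds because the densities are positive near $0$.
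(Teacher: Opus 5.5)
Your A6--A7 branch is the paper's proof step for step. A2 and A1 on the dependently censored contribution give $\beta_C$. A3 and A6 on the uncensored contribution give $(\beta_U,p)$. A7 on the remaining $h^2$ identity gives $\tau$. You are also right that Lemma~\ref{IdentAlpha} is not needed there.

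For the A4--A5 branch you take a genuinely different route. The paper gives no direct argument. It applies Theorem~1 of \cite{DelhelleVK} at each fixed $x$ to get equality of the parameter functions (for example $\pi_p(x)=\pi_{\tilde p}(x)$ and $\tau^K_\tau(x)=\tau^K_{\tilde\tau}(x)$). It then lifts these to coefficients with Lemma~\ref{IdentAlpha}. Your chain is direct:
\begin{itemize}
\item $\beta_C$ from A1--A2;
\item equality of the cure thresholds;
\item $(p,\tau)$ from A5 beyond the threshold;
\item $\beta_U$ from A4, once $\pi_p=\pi_{\tilde p}$ removes the incidence from the ratio.
\end{itemize}
This is self-contained, and the ordering is correct. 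It also uses A1, A4 and A5 in the coefficient (``for all $x$'') form in which they are actually stated. The paper's reduction instead tacitly needs pointwise, parameter-level versions of these assumptions to invoke the covariate-free theorem. What the paper's route buys is brevity, plus whatever regularity the earlier theorem already handles.

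The one ingredient your route must supply itself is the threshold matching. A5 is stated on $y>\check{\tau}(x)$, but you can only reduce the $h^2$ identity to $F_U=1$ on both sides beyond the larger of the two thresholds. It closes as you suggest, and you do not need $f_U>0$ on the whole support. Suppose $\check{\tau}_{\beta_U}(x)<\check{\tau}_{\tilde\beta_U}(x)$. Then $1-F_{\tilde\beta_U}(\check{\tau}_{\beta_U}(x)|x)>0$. So the $\tilde\beta$ uncensored sub-density has positive integral over $(\check{\tau}_{\beta_U}(x),\check{\tau}_{\tilde\beta_U}(x))$, while the $\beta$ one vanishes there, which is a contradiction. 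This needs the following:
\begin{itemize}
\item $\pi_{\tilde p}(x)>0$;
\item $F_A(y)<1$ and $F_{\beta_C}(y|x)<1$;
\item $h^1_{\tilde\tau}(v|u,x)<1$ for $u,v<1$.
\end{itemize}
The last condition is not among A1--A5. It holds for every copula with a positive density, hence for all families used in the paper. You should still state it explicitly as an added nondegeneracy assumption rather than leave it as ``a short argument''.
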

The proof of this theorem is provided in Appendix B.

In the remainder of this paper, we focus on some Archimedean copulas as well as on the Gaussian copula, which belongs to the class of elliptical copulas. It should be emphasized, however, that the results presented are not restricted to these specific families, others may also be employed provided that the above assumptions are satisfied. Moreover, the developments presented in this document may serve as a basis for establishing analogous results in other settings.

We recall the definition of Archimedean copulas:
$$\C_{\tau}(u, v|x)=\varphi^{[-1]}_{\tau}\left(\varphi_{\tau}(u|x)+\varphi_{\tau}(v|x)\big|x\right) \quad \forall u, v \in [0, 1],$$
where $\varphi_{\tau}$ is a \textit{generator}, i.e., a continuous, convex, and strictly decreasing function from $[0, 1]$ to $[0, \infty]$ such that $\varphi_{\tau}(1)=0$, and $\varphi_{\tau}^{[-1]}$ denotes its pseudo-inverse.

The Gaussian copula is defined, for $u, v \in [0, 1]$, by 
$$\C_{\tau}(u, v|x)=\Phi_{\tau}\left(\Phi^{-1}(u), \Phi^{-1}(v)\big|x\right),$$
where $\Phi$ denotes the cumulative distribution function of the standard normal distribution and $\Phi_{\tau}(\cdot, \cdot|x)$ is the joint cumulative distribution function of a bivariate standard normal random vector with a correlation matrix determined by $\tau^{K}_{\tau}(x)$.

Combining Theorem 2 of \cite{DelhelleVK} with Lemma \ref{IdentAlpha} in Appendix B shows that assumption A1 holds for a broad class of marginal distributions (e.g., log-normal, log-logistic, Weibull), as does assumption A4 in the case of truncated distributions. In addition, several copula families (e.g., Frank, Joe, Clayton(90)\footnote{Clayton($\alpha$) denotes the Clayton copula rotated by $\alpha$ degrees.}) satisfy assumptions A2, A3, and A5, typically independently of the marginal distributions, with only mild additional conditions required for the Gumbel and Gaussian copulas.\\
Assumptions A6 and A7 enable alternative identifiability arguments in settings where the distribution of $U$ is non-truncated, which constitute a central contribution of this work. This extension broadens the applicability of the model and provides increased flexibility for data analysis. The following theorem specifies the conditions under which these two assumptions are satisfied.

\begin{Theorem}[Assumptions A6 and A7]\label{IdentA6A7}
    Assume that the link functions are one-to-one and the components of $X$ are linearly independent.
    \begin{enumerate}[label=\alph*.]
        \item Assumption A6 is satisfied for the families of log-normal and log-student-t densities.
        \item Assumption A6 is satisfied for the families of Weibull and Gamma densities provided that the following condition is met:
        $$\frac{\pi_p(x)\tilde\lambda(x)^{k(x)}}{\pi_{\tilde p}(x)\lambda(x)^{k(x)}}=1 \quad\forall x \Longleftrightarrow \beta_{U}=\tilde\beta_{U} \text{ and } p=\tilde p,$$
        where $\lambda(x)>0$ and $k(x)>0$ are the scale and shape parameters of the distribution depending on $\beta_{U}$.
        \item Assumption A7 is satisfied for the Frank, Gumbel, Joe, Gaussian and Clayton(90) copulas, independently of the marginal distributions. For the Clayton(180) and Clayton(270) copulas we need, respectively, the following additional conditions:
        $$f_{\beta_{U}}(0|x) \neq 0 \quad\forall\beta_{U}, \forall x$$
        and
        $$f_{\beta_{C}}(0|x) < \infty \quad\forall\beta_{C}, \forall x \quad\text{and}\quad f_{\beta_{U}}(0|x) \neq 0 \quad\forall\beta_{U}, \forall x.$$
    \end{enumerate}
\end{Theorem}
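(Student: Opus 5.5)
For parts (a) and (b), the plan is to compute the small-$y$ asymptotics of $f_{\beta_U}(y|x)$ for each family and read off what $\lim_{y\to0}\pi_p f_{\beta_U}/(\pi_{\tilde p}f_{\tilde\beta_U})=1$ forces on the parameters.

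\textbf{Log-normal case.} Write the density as
$$f(y|x)=\frac{1}{y\sigma(x)\sqrt{2\pi}}\exp\{-(\log y-\mu(x))^2/(2\sigma(x)^2)\}.$$
Put $s=\log y\to-\infty$. Then
$$\log\frac{\pi_p f_{\beta_U}}{\pi_{\tilde p}f_{\tilde\beta_U}}=\log\frac{\pi_p\tilde\sigma}{\pi_{\tilde p}\sigma}-\frac{(s-\mu)^2}{2\sigma^2}+\frac{(s-\tilde\mu)^2}{2\tilde\sigma^2}.$$
This is a quadratic polynomial in $s$, and it must tend to $0$ as $s\to-\infty$. Hence its $s^2$, $s$ and constant coefficients all vanish. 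In turn, $\sigma(x)=\tilde\sigma(x)$, $\mu(x)=\tilde\mu(x)$ and $\pi_p(x)=\pi_{\tilde p}(x)$ for every $x$.

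\textbf{Log-student-t case.} The same expansion applies, except the exponential is replaced by the factor $(1+(s-\mu)^2/(\nu\sigma^2))^{-(\nu+1)/2}$. As $s\to-\infty$ the ratio behaves like a power $|s|^{\tilde\nu-\nu}$ times a constant. So first $\nu(x)=\tilde\nu(x)$. Then the leading constant $\pi_p\sigma^{\nu}/(\pi_{\tilde p}\tilde\sigma^{\nu})$ must equal $1$. The next-order term in $1/s$ gives $\mu=\tilde\mu$.

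Identifying $\sigma$ and $\pi$ separately is the delicate step. The next term in the expansion involves $\sigma$ through a different power, so it can be used to separate $\sigma$ from $\pi$. If this fails, one falls back on a condition analogous to (b). I expect this bookkeeping to be the main obstacle in (a).

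\textbf{Passing from functions of $x$ to coefficients.} Once $\theta_U(x)=\tilde\theta_U(x)$ and $\pi_p(x)=\pi_{\tilde p}(x)$ hold for all $x$, injectivity of the link functions gives $\beta_U^\top x=\tilde\beta_U^\top x$ and $p^\top x=\tilde p^\top x$ for all $x$. Linear independence of the components of $X$ then yields $\beta_U=\tilde\beta_U$ and $p=\tilde p$.

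\textbf{Weibull and Gamma case (b).} As $y\to0$, the Weibull density satisfies $f(y)\sim k\lambda^{-k}y^{k-1}$ and the Gamma density satisfies $f(y)\sim y^{k-1}/(\Gamma(k)\lambda^k)$. The ratio therefore behaves like $c(x)y^{k-\tilde k}$, so the limit being $1$ forces $k(x)=\tilde k(x)$. What remains is
$$\frac{\pi_p\tilde\lambda^{k}}{\pi_{\tilde p}\lambda^{k}}=1.$$
This is exactly the displayed condition, which by hypothesis yields $\beta_U=\tilde\beta_U$ and $p=\tilde p$. The converse direction of each equivalence is trivial.

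\textbf{Part (c): strategy.} Assumption A7 asks that the map $\tau\mapsto h^2_\tau(u|v,x)$ be injective when evaluated along the curve
$$(u(y),v(y))=\bigl(\pi F_U(y|x),\,F_C(y|x)\bigr),\qquad y\in(0,\infty).$$
Along this curve $u$ increases from $0$ to $\pi(x)$ and $v$ increases from $0$ to $1$. Equality for all $y$ therefore gives equality on a continuum of points. I will examine the behaviour as $y\to\infty$, that is $v\to1$ with $u\to\pi(x)<1$, where for most families $h^2_\tau(\pi|v)$ has a limit that depends strictly monotonically on $\tau$.

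\textbf{Part (c): Archimedean families.} Use $h^2(u|v)=\varphi'(v)/\varphi'(\C(u,v))$. As $v\to1$, $\C(u,v)\to u$, so $h^2\to\varphi'(1)/\varphi'(\pi)$ when $\varphi'(1)\neq0$. For Frank this is an explicit strictly monotone function of $\theta$. For Gumbel and Joe, $\varphi'(1)=0$, so the limit is $0$; for these I will instead use the rate at which $h^2$ vanishes as $v\to1$, which depends on $\theta$ monotonically.

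\textbf{Part (c): Gaussian copula.} Here $h^2(u|v)=\Phi\bigl((\Phi^{-1}(u)-\rho\Phi^{-1}(v))/\sqrt{1-\rho^2}\bigr)$. As $v\to1$, the term $\Phi^{-1}(v)\to\infty$, so the sign of $\rho$ and the rate of convergence identify $\rho$. The same conclusion can be obtained at an interior point via strict monotonicity in $\rho$ at fixed $(u,v)$.

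\textbf{Part (c): rotated Clayton copulas.} For Clayton(90) a direct computation works. For Clayton(180) and Clayton(270) the informative limit is $y\to0$, where both $u$ and $v$ tend to $0$. The limit of $h^2$ then depends on the ratio $u/v\to\pi f_U(0)/f_C(0)$ by l'Hôpital's rule. This explains the extra conditions $f_U(0)\neq0$ and $f_C(0)<\infty$: they make this ratio a finite positive number, so that the limiting value of $h^2$ is a strictly monotone function of $\theta$.

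Finally, $\tau^K(x)=\tilde\tau^K(x)$ for all $x$ converts to $\tau=\tilde\tau$ exactly as before, via injectivity of the link function and linear independence of the components of $X$. I expect the Clayton(180) and Clayton(270) limits, together with the Gumbel/Joe degenerate limits, to be the main technical obstacle in (c).
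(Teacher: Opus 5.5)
Your arguments for the log-normal family, the Weibull and Gamma families and the Frank copula match the paper's own proofs. All coefficients of the quadratic in $\log y$ must vanish. The factor $y^{k-\tilde k}$ forces $k=\tilde k$ and leaves exactly the displayed condition of (b). The limit $y\to\infty$ gives a strictly monotone function of $\theta$.

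The genuine gap is the log-student-t case of (a), which the paper also covers only by ``similar arguments''. Assumption A6 constrains nothing but the \emph{value of a limit}, so any term that vanishes as $s=\log y\to-\infty$ carries no information. Take location $\mu$, scale $\sigma$ and degrees of freedom $\nu$ (all evaluated at $x$), and let $c_\nu$ be the Student-$t$ normalising constant. Then
\[
\frac{\pi_p(x)f_{\beta_U}(y|x)}{\pi_{\tilde p}(x)f_{\tilde\beta_U}(y|x)}=\frac{\pi_p(x)\,c_{\nu}\,\nu^{(\nu+1)/2}\,\sigma^{\nu}}{\pi_{\tilde p}(x)\,c_{\tilde\nu}\,\tilde\nu^{(\tilde\nu+1)/2}\,\tilde\sigma^{\tilde\nu}}\,|s|^{\tilde\nu-\nu}\,\bigl\{1+O(1/|s|)\bigr\}.
\]
So the limit being $1$ yields $\nu=\tilde\nu$ and $\pi_p(x)\sigma^{\nu}=\pi_{\tilde p}(x)\tilde\sigma^{\nu}$, and nothing else. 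Your step ``the next-order term in $1/s$ gives $\mu=\tilde\mu$'' is a non sequitur. The parameter $\mu$ enters only through the $O(1/|s|)$ correction, which tends to $0$ whatever $\mu$ and $\tilde\mu$ are. For the same reason, no higher-order term can separate $\sigma$ from $\pi$.

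In fact this part of the statement does not hold, so no argument can close the gap. Take $\tilde p=p$, $\tilde\sigma=\sigma$, $\tilde\nu=\nu$, and location coefficients different from those in $\beta_U$. The ratio is then $\bigl\{(1+(s-\mu)^2/(\nu\sigma^2))/(1+(s-\tilde\mu)^2/(\nu\sigma^2))\bigr\}^{-(\nu+1)/2}$, which tends to $1$ for every $x$ although $\beta_U\neq\tilde\beta_U$. A fallback condition of type (b) could at most settle the $\sigma$--$\pi$ trade-off, and even that is fragile. With constant $\nu$, exponential scale link and logistic incidence, $\tilde p=-p$ and $\tilde\sigma(x)=\sigma(x)e^{p^\top x/\nu}$ satisfy $\pi_p\sigma^\nu=\pi_{\tilde p}\tilde\sigma^\nu$ for all $x$, because $\pi_p(x)/\pi_{-p}(x)=e^{p^\top x}$. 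Since $\mu$ is invisible to A6, the log-student-t claim must be dropped, or $\mu$ identified by other means.

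There are also two smaller points in (c).
\begin{itemize}
\item For the Gaussian copula, your alternative via monotonicity in $\rho$ at a fixed interior point fails in general. With $a=\Phi^{-1}(u)$ and $b=\Phi^{-1}(v)$, one has $\partial_\rho\{(a-\rho b)/\sqrt{1-\rho^2}\}=(\rho a-b)/(1-\rho^2)^{3/2}$, which changes sign when $|b|<|a|$. Keep the $v\to1$ argument instead: dividing the two equal arguments of $\Phi$ by $\Phi^{-1}(v)\to\infty$ gives $\rho/\sqrt{1-\rho^2}=\tilde\rho/\sqrt{1-\tilde\rho^2}$.
\item For Clayton(180) and Clayton(270), the dependence of $\lim h^2$ on $\lim u/v$ is a feature of the unrotated Clayton at its singular corner $(0,0)$. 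With the usual rotations, $h^2\to0$ as $y\to0$ for every $\theta$; for the survival Clayton, for example, $h^2(u|v)\approx(1+\theta)u$. So the limiting value carries no information, your explanation of the extra density conditions does not apply, and you need a rate argument, as for Gumbel and Joe.
\end{itemize}
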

\begin{Remark}\label{RmkIdent}
    The condition stated in Theorem \ref{IdentA6A7}.b is verified if there is at least one continuous covariate and the shape parameter is such that $$k(x)\neq\frac{\log\left(\frac{\pi_{\tilde p}(x)}{\pi_p(x)}\right)}{\log\left(\frac{\tilde\lambda(x)}{\lambda(x)}\right)} \quad\forall x.$$
    This requirement is purely theoretical and does not entail practical restrictions, as this form is not a natural or commonly used specification for modelling a shape parameter.
\end{Remark}
The proof of Theorem \ref{IdentA6A7} and Remark \ref{RmkIdent} are provided in the Appendix B.

Taken together, these results establish the identifiability of the model for a wide range of combinations of copulas and marginal distributions, thereby ensuring substantial flexibility and practical applicability.

%%%%%%%%%%%%%%%%%%%%%
\subsection{Parameter estimation}\label{subsect3ParamEst}
%%%%%%%%%%%%%%%%%%%%%

Given an i.i.d. sample $\mathcal{D}=\left(Y_i, \Delta_{Ti},\Delta_{Ci}, X_i\right)_{i=1,\cdots,n}$, the individual likelihood contributions derived at the beginning of Subsection \ref{subsect3Ident} yield the following log-likelihood function:
\begin{align*}
   \ell(\beta;\mathcal{D}) & = \sum_{i=1}^n \log\left(l(\beta; Y_i, \Delta_{Ti}, \Delta_{Ci}, X_i)\right)\\
   & = \sum_{i=1}^n \Delta_{Ti}\log\left(\pi_p(X_i)f_{\beta_{U}}(Y_i|X_i)\left[1-h_{\tau}^{1}\left\{F_{\beta_{C}}(Y_i|X_i)|\pi_p(X_i)F_{\beta_{U}}(Y_i|X_i),X_i\right\}\right]\right)\\
   & \quad + \sum_{i=1}^n \Delta_{Ci} \log\left(f_{\beta_{C}}(Y_i|X_i)\left[1-h_{\tau}^{2}\left\{\pi_p(X_i)F_{\beta_{U}}(Y_i|X_i)|F_{\beta_{C}}(Y_i|X_i),X_i\right\}\right]\right)\\
   & \quad + \sum_{i=1}^n \left(1-\Delta_{Ti}-\Delta_{Ci}\right)\log\left(\overline{\C}_{\tau}\left\{1-\pi_p(X_i)F_{\beta_{U}}(Y_i|X_i), 1-F_{\beta_{C}}(Y_i|X_i)\big|X_i\right\}\right).
\end{align*}

Maximising this function with respect to $\beta$ yields the maximum likelihood estimator $\hat{\beta}$, whose asymptotic properties follow from standard maximum likelihood theory. Details on the required regularity conditions, including those related to limits and first- and second-order derivatives of the log-likelihood, can be found in Chapter $8$ of \cite{Mittelhammer}.
%Standard errors and confidence intervals for each component of $\hat{\beta}$ can be obtained using the usual theory based on the Fisher information matrix.

%%%%%%%%%%%%%%%%%%%%%%%%%%%%%%%%%%%%%%%%%%%%%%%%%%%%%%%%%%%%%%%%%%%%%
\section{Simulation study} \label{sect4Simus}
%%%%%%%%%%%%%%%%%%%%%%%%%%%%%%%%%%%%%%%%%%%%%%%%%%%%%%%%%%%%%%%%%%%%%

This section examines the finite-sample performance of the proposed estimation procedure. In the first subsection, we assess its behaviour across a range of settings, varying the copula family, marginal distributions, sample size, and strength of dependence, in order to evaluate their impact on the results. The second subsection is devoted to an analysis of the robustness of the method.

%%%%%%%%%%%%%%%%%%%%%
\subsection{Performance of the method}\label{subsect4Perf}
%%%%%%%%%%%%%%%%%%%%%

The aim of this subsection is to assess the ability of the model to accurately estimate its parameters through simulation. To this end, we perform simulations by generating datasets and maximising the likelihood using the \texttt{optim}$(\cdot)$ function in \textsf{R} (\cite{R}). The BFGS algorithm is employed, with a maximum of 100 iterations.

To evaluate the influence of various factors on the estimation results, we define a reference scenario and then modify its characteristics one at a time. In the reference setting, both $U$ and $C$ follow non-truncated log-normal distributions. The dependence between $T$ and $C$ does not depend on the covariate, is strong and positive ($\tau^{K}=0.75$) and is modelled using a Frank copula. No administrative censoring is introduced. The sample size is $1000$, and a single continuous covariate positively affects the model parameters, with slopes equal to $0.7$ in the linear relations, except for the dependence parameter since it is assumed to not depend on the covariate.

All scenarios considered in this subsection are summarised in Table \ref{tblSettings}, where deviations from the reference scenario (1) are highlighted in bold. The first two columns specify the marginal distributions of $U$ and $C$. For instance, in scenario (6), both marginals are Weibull distributions with the upper $1\%$ of the support of $U$ truncated. This truncation point can be chosen arbitrarily within the support of $U|X$ and may therefore be set sufficiently large so that the proportion of truncated observations is negligible. The next two columns describe the dependence between $T$ and $C$. In scenario (10), a range of values is provided for $\tau^{K}_{\tau}(x)$ because the covariate also affects this parameter, so it is no longer constant. The fifth column indicates whether administrative censoring is present. In scenario (8), $A$ is generated from a $Weibull(shape=20, scale=2.5)$ distribution. The sixth column reports the sample size, which is $1000$ in all cases except scenario (4), where it is $500$. The final two columns relate to the covariate. It is continuous in all scenarios except (9), where it is binary. Its effect on the model parameters is generally substantial, except in scenario (7), where the slopes in the linear relations are reduced to $0.4$.

Each scenario considered satisfies the identifiability conditions of the model. Specifically, assumptions A1--A3 are verified by Theorem 2 of \cite{DelhelleVK}. The non-truncated log-normal distribution satisfies assumption A6 by Theorem \ref{IdentA6A7}(a), while the truncated Weibull distribution satisfies assumption A4, as established in Theorem 2 of \cite{DelhelleVK}. Regarding the copulas, the Frank and Joe families satisfy, independently of the marginal distributions, either assumption A5 by Theorem 2 of \cite{DelhelleVK} or assumption A7 by Theorem \ref{IdentA6A7}(c).

\begin{table}[!h]
\centering
%\fontsize{9}{9}\selectfont
%\scriptsize
%\small
\begin{tabular}{l||cc|cc|c|c|cc}
%\hline
Scenario & Densities & Trunc. $U$ & Dep. strength & Copula & $A$ & Sample size & $X$ & Slopes \\ 
 \hline
 1 & Log-normal & No & 0.75 & Frank & No & 1000 & Cont. & 0.7 \\ 
 2 & Log-normal & No & \bf{0.25} & Frank & No & 1000 & Cont. & 0.7 \\ 
 3 & Log-normal & No & \bf{-0.75} & Frank & No & 1000 & Cont. & 0.7 \\ 
 4 & Log-normal & No & 0.75 & Frank & No & \bf{500} & Cont. & 0.7 \\ 
 5 & Log-normal & No & 0.75 & \bf{Joe} & No & 1000 & Cont. & 0.7 \\ 
 6 & \bf{Weibull} & \bf{Yes} & 0.75 & Frank & No & 1000 & Cont. & 0.7 \\ 
 7 & Log-normal & No & 0.75 & Frank & No & 1000 & Cont. & \bf{0.4} \\ 
 8 & Log-normal & No & 0.75 & Frank & \bf{Yes} & 1000 & Cont. & 0.7 \\ 
 9 & Log-normal & No & 0.75 & Frank & No & 1000 & \bf{Bin.} & 0.7 \\ 
 10 & Log-normal & No & \bf{[0.55, 0.87]} & Frank & No & 1000 & Cont. & 0.7 \\ 
\end{tabular}
\smallskip
\caption{Settings of the cases considered in this subsection. \textit{Dep. strength} is the Kendall's tau (depending on the covariate for scenario (10)). The column \textit{X} indicates if the covariate is continuous (\textit{Cont.}) or binary (\textit{Bin.}).}
\label{tblSettings}
\end{table}

Table \ref{tblPctgsCens} below reports the cure rates and the dependent and administrative censoring rates in the data for each scenario listed in Table \ref{tblSettings}.

\begin{table}[!h]
\centering
%\fontsize{9}{9}\selectfont
%\scriptsize
%\small
\begin{tabular}{l|rrrrrrrrrr}
%\hline
& 1 & 2 & 3 & 5 & 6 & 7 & 8 & 9 & 10 \\ 
 \hline
 Dependent censoring rate ($C$) & 51.1 & 52.3 & 50.9 & 52.5 & 57.9 & 50.3 & 33.7 & 49 & 70.7 \\ 
 Administrative censoring rate ($A$) & 0 & 0 & 0 & 0 & 0 & 0 & 18.8 & 0 & 0 \\ 
 Cure rate & 40.4 & 40.4 & 40.4 & 40.4 & 40.4 & 40.1 & 40.4 & 32.5 & 40.5 \\
\end{tabular}
\smallskip
\caption{Censoring and cure rates for the settings considered in this subsection.}
\label{tblPctgsCens}
\end{table}

Regarding data generation, each coefficient of the model defined in (\ref{eq:modelmixture})--(\ref{eq:modelcopula}) is fixed to values depending on the scenario under consideration; these values are provided in Table \ref{Coefs41} in Appendix A. The covariate is generated from a $Unif(-1, 1)$ distribution when continuous, and from a $Ber(0.5)$ distribution when binary. The choice of link functions is guided by the parameter domains. For Kendall's $\tau$, we use the link $\{\exp(\cdot)-1\}/\{\exp(\cdot)+1\}$ when the dependence may be positive or negative (Frank copula), and the logistic function $\{\exp{\left(\cdot\right)}\}/\{1+\exp{\left(\cdot\right)}\}$ when the dependence is constrained to be positive (Joe copula). For strictly positive parameters (e.g., Weibull parameters and the standard deviation of the logarithm in the log-normal distribution), the exponential link is used, whereas the identity link is used for the remaining log-normal parameter, whose domain is $\mathbb{R}$. And, as throughout this document, a logistic model is used for the incidence.

Starting values for the optimisation are selected from a grid of $20$ parameter vectors, combining different dependence values with random perturbations of baseline coefficient estimates. Specifically, for the dependence parameter, we generate a sequence of $20$ values between $-1$ and $1$ (or between $0$ and $1$ for the Joe copula), to which the inverse link function is applied. For the remaining parameters, initial values are obtained as follows. The incidence is first estimated using the \texttt{survfit}$(\cdot)$ function in \textsf{R}; the logit transform of this estimate is then used as an initial value for the corresponding intercept. For the distribution parameters, maximum likelihood estimates are computed without taking into account neither the cure fraction nor the dependent censoring and are then transformed via the inverse link functions to obtain initial intercepts. All slope coefficients are initially set to $1$. The intercepts are then perturbed by multiplying them by $20$ random draws from a $Unif(0.5, 1.5)$ distribution, while the slopes are perturbed using $20$ random draws from a $Unif(-1, 1)$ distribution. These components are combined to construct a grid of 20 starting values.

The simulation study is based on $1000$ replications. Owing to the complexity of the optimisation problem and the number of parameters involved, the algorithm fails to converge for a small proportion of samples. This results in a limited number of outlying estimates, primarily affecting the intercept associated with the incidence component. These outliers are removed entirely or partially, with a maximum exclusion rate of 2\% of the iterations.
%As is frequently observed in the literature, the decision was taken to remove these outliers prior to performing bias calculations and other final computations.

Selected results are reported in Tables \ref{tblSim}--\ref{tblMedX}. To keep the document concise, only a subset of the scenarios listed in Table \ref{tblSettings} is included in Table \ref{tblMedX}. For clarity, we restrict attention to the coefficients associated with the dependence, the incidence, and the parameters of the distribution of $U$. For each scenario, the true coefficient values, the corresponding bias, the standard deviation of the estimators, and the root mean squared error are reported. Table \ref{tblSim} presents the results for the coefficient estimates. Table \ref{tblMedX} reports the estimated dependence parameter ($\tau^{K}_{\tau}$), the model incidence ($\pi_p$), and the quartiles of the distribution of $U$ (denoted by Q1$_U$, Q2$_U$, and Q3$_U$), evaluated for the continuous covariate set to $0.5$.

\begin{table}[!h]
\centering
%\fontsize{9}{9}\selectfont
%\scriptsize
%\small
\begin{tabular}{c|l|rrrrrrrr}
%\hline
Scenario &  & $\beta_{U_{0}}$ & $\beta_{U_{1}}$ & $\beta_{U_{2}}$ & $\beta_{U_{3}}$ & $p_0$ & $p_1$ & $\tau_0$ & $\tau_1$ \\ 
  \hline
1 & True values & 0.00 & 0.70 & -0.69 & 0.70 & 0.41 & 0.70 & 1.95 & \\ 
  & Bias & -0.00 & -0.00 & -0.01 & -0.00 & -0.00 & -0.00 & 0.00 & \\  
  & SD & 0.04 & 0.06 & 0.05 & 0.08 & 0.12 & 0.20 & 0.24 & \\ 
  & RMSE & 0.04 & 0.06 & 0.05 & 0.08 & 0.12 & 0.20 & 0.24 & \\ 
  \hline
2 & True values & 0.00 & 0.70 & -0.69 & 0.70 & 0.41 & 0.70 & 0.51 & \\ 
  & Bias & -0.01 & -0.01 & -0.01 & -0.00 & -0.01 & -0.01 & -0.12 & \\  
  & SD & 0.06 & 0.07 & 0.06 & 0.08 & 0.18 & 0.25 & 0.47 & \\ 
  & RMSE & 0.06 & 0.07 & 0.06 & 0.08 & 0.18 & 0.25 & 0.48 & \\ 
   \hline
3 & True values & 0.00 & 0.70 & -0.69 & 0.70 & 0.41 & 0.70 & -1.95 & \\ 
  & Bias & 0.01 & -0.01 & -0.00 & -0.01 & 0.12 & -0.02 & -0.08 & \\ 
  & SD & 0.08 & 0.07 & 0.07 & 0.09 & 0.92 & 0.53 & 0.63 & \\ 
  & RMSE & 0.08 & 0.07 & 0.07 & 0.09 & 0.93 & 0.53 & 0.63 & \\ 
   \hline
4 & True values & 0.00 & 0.70 & -0.69 & 0.70 & 0.41 & 0.70 & 1.95 & \\ 
  & Bias & 0.00 & -0.00 & -0.01 & -0.01 & 0.01 & 0.01 & 0.01 & \\ 
  & SD & 0.06 & 0.09 & 0.07 & 0.11 & 0.18 & 0.30 & 0.26 & \\ 
  & RMSE & 0.06 & 0.09 & 0.07 & 0.11 & 0.18 & 0.30 & 0.26 & \\ 
   \hline
5 & True values & 0.00 & 0.70 & -0.69 & 0.70 & 0.41 & 0.70 & 1.10 & \\ 
  & Bias & -0.00 & -0.00 & -0.01 & 0.00 & 0.00 & 0.00 & -0.00 & \\ 
  & SD & 0.04 & 0.07 & 0.05 & 0.08 & 0.13 & 0.22 & 0.14 & \\ 
  & RMSE & 0.04 & 0.07 & 0.05 & 0.08 & 0.13 & 0.22 & 0.14 & \\ 
   \hline
6 & True values & 0.50 & 0.70 & 0.20 & 0.70 & 0.41 & 0.70 & 1.95 & \\ 
  & Bias & 0.01 & 0.01 & -0.00 & -0.01 & 0.02 & -0.00 & -0.00 & \\ 
  & SD & 0.04 & 0.07 & 0.07 & 0.09 & 0.13 & 0.20 & 0.17 & \\ 
  & RMSE & 0.04 & 0.07 & 0.07 & 0.09 & 0.13 & 0.20 & 0.17 & \\
   \hline
7 & True values & 0.00 & 0.40 & -0.69 & 0.40 & 0.41 & 0.40 & 1.95 & \\ 
  & Bias & -0.00 & -0.01 & -0.01 & -0.00 & 0.00 & -0.00 & -0.00 & \\ 
  & SD & 0.04 & 0.07 & 0.05 & 0.09 & 0.11 & 0.19 & 0.15 & \\ 
  & RMSE & 0.04 & 0.07 & 0.05 & 0.09 & 0.11 & 0.19 & 0.15 & \\ 
    \hline
8 & True values & 0.00 & 0.70 & -0.69 & 0.70 & 0.41 & 0.70 & 1.95 & \\ 
  & Bias & -0.00 & -0.00 & -0.01 & -0.00 & -0.00 & -0.00 & -0.02 & \\ 
  & SD & 0.05 & 0.07 & 0.05 & 0.08 & 0.14 & 0.22 & 0.37 & \\ 
  & RMSE & 0.05 & 0.07 & 0.05 & 0.08 & 0.14 & 0.22 & 0.37 & \\ 
    \hline
9 & True values & 0.00 & 0.70 & -0.69 & 0.70 & 0.41 & 0.70 & 1.95 & \\ 
  & Bias & -0.00 & -0.01 & -0.01 & -0.00 & -0.00 & 0.05 & -0.03 & \\ 
  & SD & 0.06 & 0.15 & 0.08 & 0.11 & 0.16 & 0.63 & 0.44 & \\ 
  & RMSE & 0.06 & 0.15 & 0.08 & 0.11 & 0.16 & 0.63 & 0.44 & \\ 
    \hline
10 & True values & 0.00 & 0.70 & -0.69 & 0.70 & 0.41 & 0.70 & 1.95 & 0.70 \\ 
  & Bias & 0.06 & 0.04 & 0.01 & -0.02 & -0.78 & 0.36 & -1.00 & -0.18 \\ 
  & SD & 0.07 & 0.11 & 0.07 & 0.11 & 0.16 & 0.25 & 0.25 & 0.34 \\ 
  & RMSE & 0.09 & 0.12 & 0.07 & 0.12 & 0.80 & 0.44 & 1.03 & 0.39 \\ 
\end{tabular}
\smallskip
\caption{True values, bias, standard deviation (SD) and root mean squared error (RMSE) of the estimators for the settings described above. The final column, corresponding to slope for the dependence, concerns only the scenario (10), as this is the only one in which the covariate affects $\tau^{K}_{\tau}$.}
\label{tblSim}
\end{table}

Overall, the results in Table \ref{tblSim} are satisfactory, even excellent. Biases are generally small, with the exception of the dependence parameter in scenario (2). Only scenario (10), in which the dependence parameter is also affected by the covariate, yields less favourable results. In this case, biases increase for several coefficients, particularly those associated with the incidence and $\tau^{K}_{\tau}$. One possible explanation is that, as shown in Table \ref{tblPctgsCens}, scenario (10) involves a substantially higher proportion of censored observations. The reduced number of uncensored observations limits the available information for estimating the dependence structure.

\begin{table}[!h]
\centering
%\fontsize{9}{9}\selectfont
%\scriptsize
%\small
\begin{tabular}{c|l|rrrrrrr}
%\hline
Scenario &  & $\tau^{K}_{\tau}(0.5)$ & $\pi_p(0.5)$ & $\theta_{U1}(0.5)$ & $\theta_{U2}(0.5)$ & Q1$_U(0.5)$ & Q2$_U(0.5)$ & Q3$_U(0.5)$ \\ 
  \hline
1 & True values & 0.75 & 0.68 & 0.35 & 0.71 & 0.88 & 1.42 & 2.29 \\ 
  & Bias & -0.00 & -0.00 & -0.00 & -0.01 & 0.00 & -0.00 & -0.01 \\ 
  & SD & 0.04 & 0.04 & 0.07 & 0.05 & 0.05 & 0.10 & 0.22 \\ 
  & RMSE & 0.04 & 0.04 & 0.07 & 0.05 & 0.05 & 0.10 & 0.22 \\ 
  \hline
3 & True values & -0.75 & 0.68 & 0.35 & 0.71 & 0.88 & 1.42 & 2.29 \\ 
  & Bias & -0.00 & 0.00 & 0.00 & -0.00 & 0.01 & 0.01 & 0.02 \\ 
  & SD & 0.07 & 0.07 & 0.09 & 0.05 & 0.06 & 0.15 & 0.32 \\ 
  & RMSE & 0.07 & 0.07 & 0.09 & 0.05 & 0.06 & 0.15 & 0.32 \\  
  \hline
5 & True values & 0.75 & 0.68 & 0.35 & 0.71 & 0.88 & 1.42 & 2.29 \\ 
  & Bias & -0.00 & -0.00 & -0.00 & -0.00 & 0.00 & -0.00 & 0.00 \\ 
  & SD & 0.03 & 0.05 & 0.07 & 0.05 & 0.05 & 0.10 & 0.23 \\ 
  & RMSE & 0.03 & 0.05 & 0.07 & 0.05 & 0.05 & 0.10 & 0.23 \\ 
  \hline
6 & True values & 0.75 & 0.68 & 2.34 & 1.73 & 1.02 & 1.48 & 1.99 \\ 
  & Bias & -0.00 & 0.00 & 0.04 & -0.01 & 0.00 & -0.00 & -0.01 \\ 
  & SD & 0.04 & 0.02 & 0.11 & 0.06 & 0.04 & 0.05 & 0.07 \\ 
  & RMSE & 0.04 & 0.02 & 0.12 & 0.06 & 0.04 & 0.05 & 0.07 \\ 
 \hline
10 & True values & 0.82 & 0.68 & 0.35 & 0.71 & 0.88 & 1.42 & 2.29 \\ 
  & Bias & -0.28 & -0.14 & 0.08 & 0.00 & 0.07 & 0.13 & 0.22 \\ 
  & SD & 0.08 & 0.06 & 0.11 & 0.06 & 0.08 & 0.18 & 0.37 \\ 
  & RMSE & 0.29 & 0.16 & 0.14 & 0.06 & 0.11 & 0.22 & 0.43 \\
\end{tabular}
\smallskip
\caption{True values, bias, standard deviation (SD) and root mean squared error (RMSE) of the parameters when the continuous covariate is equal to $0.5$, for several settings described above. Scenario (10), is the only one in which the covariate affects $\tau^{K}_{\tau}$ as well.}
\label{tblMedX}
\end{table}

The results in Table \ref{tblMedX} are also satisfactory and consistent with those observed for the coefficient estimates. Again, scenario (10) stands out as more challenging. Although biases appear slightly reduced when considering the model parameters, and standard deviations are somewhat smaller, the estimation of $\tau^{K}_{\tau}$ and the incidence remain difficult when the dependence parameter varies with the covariate.

Taken together, these findings indicate that the proposed estimation procedure exhibits good finite-sample performance overall.

%%%%%%%%%%%%%%%%%%%%%
\newpage
\subsection{Robustness}\label{subsect4Robustness}
%%%%%%%%%%%%%%%%%%%%%

This subsection is devoted to assessing the robustness of the proposed method to the misspecification of the marginals
or the copula. To this end, we consider several scenarios. In some cases, the marginal distributions of $U$ and $C$ are fixed while the copula is selected from the data; in others, the copula is fixed and the marginal distributions are chosen using a model selection criterion.

Copula selection is performed among the independence copula, Frank, Gumbel, Joe, Clayton(90), Clayton(180), Clayton(270), and Gaussian families, while marginal distributions are selected from the Weibull, Gamma, and log-normal families. In each case, the optimisation is carried out and, among all candidate models, the best one is chosen using the Akaike Information Criterion (AIC).

These approaches are intended to reflect more realistic applications, where the underlying dependence structure and marginal distributions are unknown, as in the analysis of real data described in Section \ref{sect5Data}. They also allow us to evaluate the ability of the model selection procedure to recover the true copula and marginal distributions, as well as to assess the sensitivity of relevant quantities to model misspecification.

For these analyses, $500$ replications are used, with a sample size of $1000$. Two settings are considered. In the first, data are generated from a truncated Weibull marginal distribution for $U$ and a non-truncated Weibull for $C$, with dependence between $T$ and $C$ modelled by a Joe copula with $\tau^{K}=0.75$. In the second, data are generated from non-truncated log-normal marginal distributions, similarly to scenario (5) in Subsection \ref{subsect4Perf}. In both settings, no administrative censoring is introduced, and a uniform covariate affects all model parameters except the dependence parameter. The corresponding coefficient values are provided in Table \ref{Coefs42} in Appendix A.

The results are reported in Table \ref{tblChosenWbTrLn}. For each setting, we compute the proportion of correct selections based on the AIC, as well as the relative bias and root mean squared error for the estimated quartiles of the survival function of $U$, evaluated for the covariate equal to $0.5$.

\begin{table}[!h]
\centering
%\fontsize{9}{9}\selectfont
%\scriptsize
%\small
\begin{tabular}{l|c|l|rrr}
& \% correct &  & Q1$_U(0.5)$ & Q2$_U(0.5)$ & Q3$_U(0.5)$ \\ 
 \hline
 Correctly specified model & & Relative bias & 0.004 & -0.003 & -0.009 \\ 
 (Joe copula and truncated Weibull) & & RMSE & 0.032 & 0.037 & 0.053 \\ 
 \hline
 \hline
 \multirow{2}{*}{Data-driven densities} & \multirow{2}{*}{99.0}
 & Relative bias & 0.004 & -0.003 & -0.008 \\ 
 & & RMSE & 0.032 & 0.037 & 0.053 \\  
 \hline
 \hline
 \multirow{2}{*}{Data-driven copula} & \multirow{2}{*}{54.4}
 & Relative bias & 0.004 & -0.003 & -0.009 \\ 
 & & RMSE & 0.032 & 0.037 & 0.053 \\ 
 \hline
 \hline
 \hline
 Correctly specified model & & Relative bias & 0.001 & 0.001 & 0.002 \\ 
 (Joe copula and non-truncated log-normal) & & RMSE & 0.052 & 0.113 & 0.246 \\ 
 \hline
 \hline
 \multirow{2}{*}{Data-driven densities} & \multirow{2}{*}{83.8}
 & Relative bias & -0.004 & -0.010 & -0.019 \\ 
 & & RMSE & 0.052 & 0.116 & 0.272 \\ 
 \hline
 \hline
 \multirow{2}{*}{Data-driven copula} & \multirow{2}{*}{57.6}
 & Relative bias & -0.001 & -0.002 & -0.003 \\ 
 & & RMSE & 0.053 & 0.115 & 0.251 \\
\end{tabular}
\smallskip
\caption{Relative bias and root mean squared error (RMSE) for the estimated quartiles when $X=0.5$.}
\label{tblChosenWbTrLn}
\end{table}

The truncated Weibull case shows that the model selection procedure performs well in identifying the marginal distributions. In contrast, the copula is incorrectly selected in approximately 46\% of the cases. A more detailed analysis indicates that the AIC predominantly selects the Clayton(180) copula in these instances, which is not unexpected given its similarity to the Joe copula. Despite these misclassifications, the relative biases and root mean squared errors are nearly identical to those obtained under the correctly specified model. Similar patterns are observed for the case of non-truncated log-normal distributions. Again, confusion arises primarily between the Joe and Clayton(180) copulas. Regarding the marginal distributions, an error rate of approximately 16\% is observed, with most misclassifications occurring between the log-normal and Gamma families.

These results indicate that, although the combination of a Joe copula with non-truncated log-normal marginals constitutes a deliberately challenging setting for model selection, the estimated survival curves remain accurate. This suggests that the proposed method possesses a satisfactory degree of robustness with respect to the misspecification of the copula family and the marginal distributions.

%%%%%%%%%%%%%%%%%%%%%%%%%%%%%%%%%%%%%%%%%%%%%%%%%%%%%%%%%%%%%%%%%%%%%
\section{Illustration on real data} \label{sect5Data}
%%%%%%%%%%%%%%%%%%%%%%%%%%%%%%%%%%%%%%%%%%%%%%%%%%%%%%%%%%%%%%%%%%%%%

This section illustrates the practical relevance of the proposed approach through its application to a real dataset. The data are obtained from the \cite{OvaryP2} (www.seer.cancer.gov), and focus on patients diagnosed with ovarian cancer. From this heterogeneous dataset, a consistent subset is selected. We restrict our attention to patients diagnosed with malignant cancer in 2000 or 2001, with an age at diagnosis between $45$ and $84$ years. Among the many available variables, the following are retained:
\begin{enumerate}[label=(\arabic*)]
    \item \textit{Survival months}, corresponding to the observed variable $Y$. Individuals with missing values or with a recorded survival time equal to zero are excluded.
    \item \textit{Cause of death (COD) to site recode}, which records the cause of death. As this is not the primary focus of the analysis, the variable is recoded into three categories: alive (treated as administratively censored), death due to ovarian cancer (event of interest), and death due to other causes (potentially dependent censoring).
    \item \textit{Age} at diagnosis, taking values between $45$ and $84$ years. (Scaled between -1 and 1 for practical use in code.)
    \item \textit{Stage}, a categorical variable with three levels indicating disease progression: \textit{Localized} (cancer confined to the organ of origin, $13.83\%$), \textit{Regional} (spread to nearby structures or lymph nodes, $17.45\%$), and \textit{Distant} (metastasis to distant organs, $68.72\%$).
\end{enumerate}
The dataset retained for the subsequent analyses comprises $7518$ patients. Among them, $4636$ ($61.66\%$) died from ovarian cancer, $1258$ ($16.73\%$) were administratively censored (i.e., still alive at the end of follow-up), and the remaining $1624$ ($21.60\%$) died from causes other than ovarian cancer and are therefore subject to potentially dependent censoring. For example, after ovarian cancer itself, the most frequent cause of death recorded in the database is \textit{Diseases of Heart}, which may plausibly be indirectly associated with ovarian cancer through the pro-thrombotic state induced by the disease (see, for example, \cite{OvarianVTE2005}). This interpretation is further supported by evidence showing that venous thromboembolism is associated with reduced median overall survival among patients with ovarian cancer (\cite{OvarianVTE2016}).

Although the Kaplan–Meier estimator is not appropriate in the presence of dependent censoring, plotting it without distinguishing between censoring types reveals a plateau around $0.3$ in Figure \ref{fig:KM}. This feature is commonly interpreted as evidence of a cure fraction (\cite{AmicoVK}).

\begin{figure}[!h]
    \centering
    \vspace*{-0.5cm}
    \includegraphics[scale=0.6]{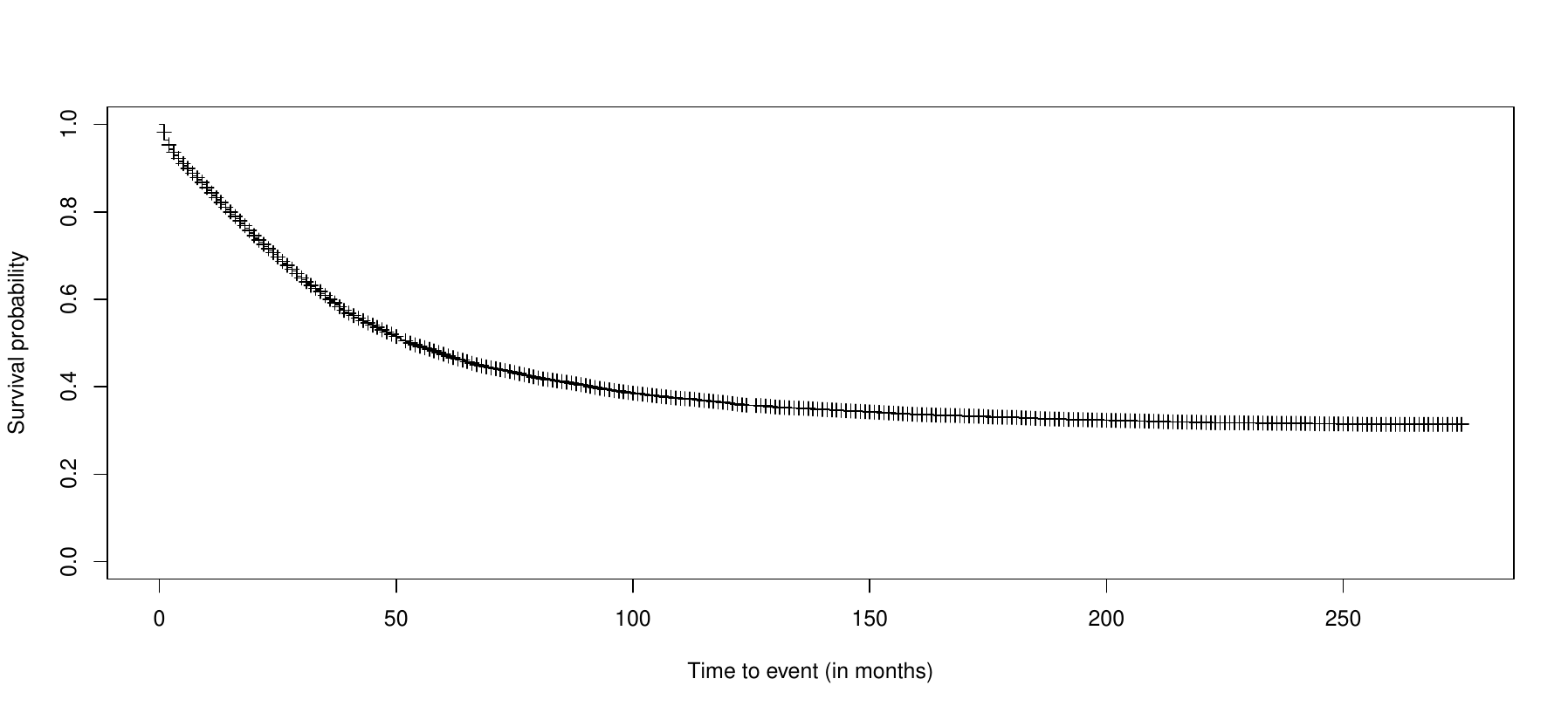}
    \caption{Kaplan-Meier curve for the data, with the two types of censoring being lumped together.}
    \label{fig:KM}
\end{figure}

The coexistence of a cure fraction and potentially dependent censoring suggests that the proposed model is well suited for the analysis of these data.

The analysis was carried out using the proposed model under various combinations of copulas and marginal distributions. The copula families considered include the independence copula, as well as the Frank, Gumbel, Joe, Clayton(90-180-270), and Gaussian copulas. For the marginal distributions of $U$ and $C$, for simplicity, the same family is used for both variables in each specification, although this is not required in practice. The distributions considered are non-truncated and include the Weibull, Gamma, and log-normal families. The link functions used for parameter modelling are identical to those employed in the simulation study; a complete specification is provided in Appendix A.

For each candidate model, the likelihood and the AIC are computed. The model with the lowest AIC ($70665.52$) corresponds to a Frank copula with Weibull marginals. This selected model is shown to be significantly preferable to the model assuming independence between $T$ and $C$, based on a likelihood ratio test. The test statistic is
$$\lambda_{LR} = -2(\text{logLik}_{Ind}-\text{logLik}_{Frank}) = -2(-35377.14+35311.76) = 130.76,$$
which substantially exceeds the critical value $\chi_{1,0.95}^2=3.84$. The associated p-value is
$$P(\chi_{1}^2>\lambda_{LR})=P(\chi_{1}^2>130.76)=2.8e^{-30},$$
providing strong evidence in favour of the model with the Frank copula. This result supports the presence of dependence between the survival time and the non-administrative censoring time.

According to the AIC criterion, the next two best models are the Gumbel copula with Weibull marginal (AIC=$70677.94$) and the Frank copula with Gamma marginals (AIC=$70699.44$). Figure \ref{fig:SurvCurvTop3}, which displays the survival functions for uncured individuals across distant, regional, and localized cancer stages, shows that these models produce results very close to those of the selected model. This similarity is expected, as they share either the copula family or the marginal distributions. Such consistency further supports the robustness of the proposed approach.

\begin{figure}[!h]
    \centering
    \vspace*{0.25cm}
    \includegraphics[scale=0.8]{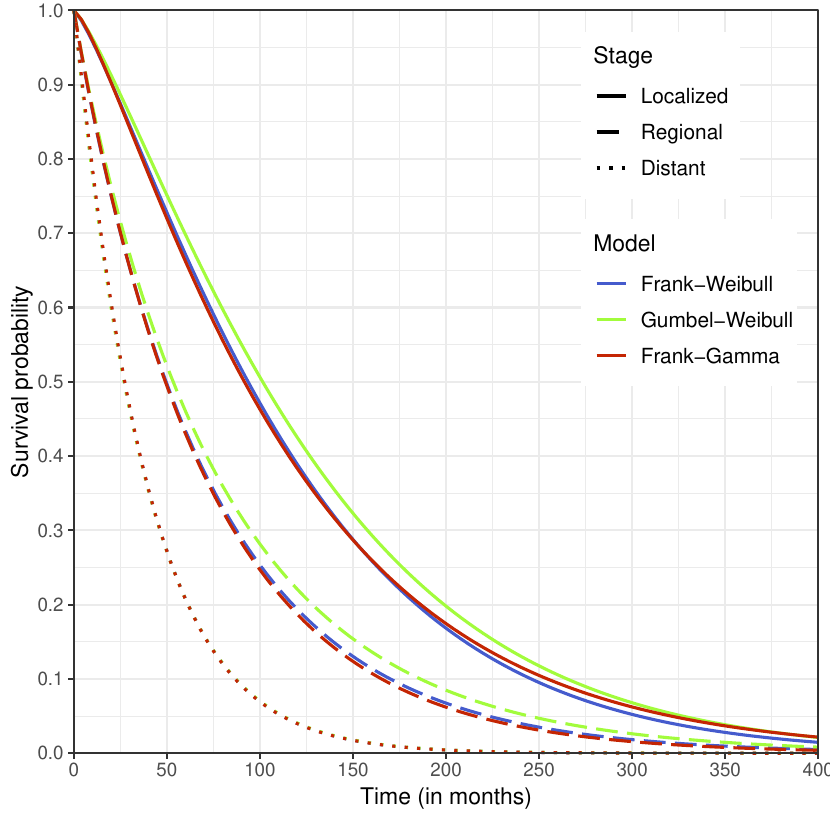}
    \caption{Comparison of the survival functions for uncured individuals across the three best models according to the AIC criterion.}
    \label{fig:SurvCurvTop3}
\end{figure}

The contribution of each covariate to the different model parameters was assessed by fitting a series of reduced models in which the effect of a given covariate on a specific parameter was removed (e.g., excluding the effect of \textit{Stage} on the incidence, then excluding that of \textit{Age}, and so on). The AIC values of these reduced models were compared with that of the full model. The only reduced model yielding a lower AIC was the one in which \textit{Age} had no effect on the shape parameter of the distribution of $C$.% However, a likelihood ratio test indicated no significant difference between this reduced specification and the full model. Consequently, the full model is retained for the subsequent analyses.

Table \ref{tblOvary} reports the estimated parameters for the selected model, evaluated at the median value of the \textit{Age} variable (63 years) and for each of the three cancer stages included in the dataset. The quartiles of the distribution of $U$ are also provided.

\begin{table}[!h]
\centering
%\fontsize{8}{8}\selectfont
\small
%\scriptsize
\begin{tabular}{l|rrrrrrrrr}
 Stage & $\tau^{K}$ & $p$ & $\theta_{U1}$ & $\theta_{U2}$ & $\theta_{C1}$ & $\theta_{C2}$ & Q1$_U$ & Q2$_U$ & Q3$_U$ \\
 \hline
 Localized & 0.45 & 0.27 & 1.24 & 125.70 & 1.55 & 304.27 & 46.19 & 93.63 & 163.44 \\ 
 Regional & 0.45 & 0.54 & 0.97 & 72.20 & 1.30 & 260.11 & 20.08 & 49.55 & 100.98 \\ 
 Distant & 0.45 & 0.89 & 1.03& 38.74 & 1.00 & 161.38 & 11.60 & 27.17 & 53.14 \\
\end{tabular}
\smallskip
\caption{Estimators of the parameters of the selected model (Frank - Weibull) in case of individuals aged 63 (median of \textit{Age}) and the three levels of \textit{Stage}.}
\label{tblOvary}
\end{table}

The results indicate a positive dependence between $T$ and $C$ ($\tau^{K}=0.45$), which is consistent with expectations. In addition, a cure fraction is observed in all three stages, with a marked decrease as the disease progresses to more advanced stages, in line with clinical knowledge. Similarly, the quartiles of the distribution of $U$ decrease from localized to regional and further to distant stages. This pattern is illustrated in the left panel of Figure \ref{fig:SurvCurvIncidFkWb}, which displays the survival curves for uncured individuals aged 63. The survival probability declines substantially for patients with distant-stage cancer, whereas individuals with localized-stage cancer exhibit more favourable outcomes.

A more detailed assessment of the covariate effects can also be conducted. For example, the right panel of Figure \ref{fig:SurvCurvIncidFkWb} presents the age-specific incidence rates across the different cancer stages.

\begin{figure}[!h]
    \centering
    \vspace*{-0.5cm}
    \includegraphics[scale=0.77]{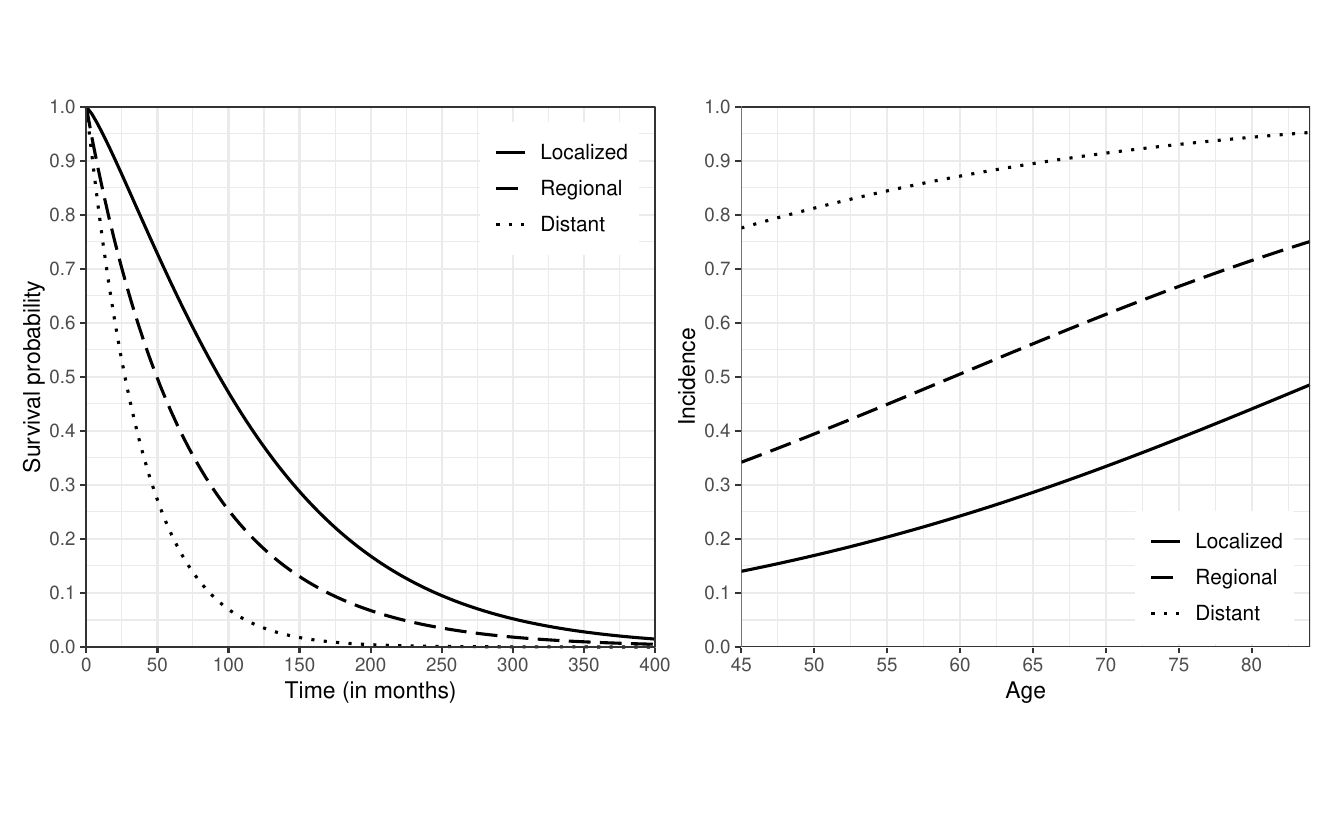}
    \caption{Left : Survival curves for uncured individuals of 63 years old (median age). Right : Age-specific incidence rates for the different stages of cancer.}
    \label{fig:SurvCurvIncidFkWb}
\end{figure}

Finally, a bootstrap analysis was conducted to illustrate an advantage of the proposed model, namely its ability to provide more accurate estimates of clinically relevant quantities than a model assuming independence. Specifically, $1000$ bootstrap samples were generated by sampling with replacement from the original dataset. For each sample, the quartiles of the distribution of $U$ were estimated for patients aged $63$, across all three cancer stages, using both a model with an independence copula and Weibull marginals, and the model selected by the algorithm (Frank copula with Weibull marginals). %Based on these results, the estimated standard deviations of the differences in quartile estimates, denoted by $\hat{\sigma}_{\text{Diff}}$, were computed, and $95\%$ confidence intervals of the form $\text{IC}_{95\%}=\text{Diff}_{Q}\pm z_{0.975}\hat{\sigma}_{\text{Diff}}$ were constructed. 
Based on these estimates, $95\%$ confidence intervals for each difference between quartiles were constructed using the percentile bootstrap method. For patients with distant-stage cancer, these intervals, which can be found in Table \ref{tblICsQunatiles}, do not include zero, indicating that the quartile estimates obtained under the Frank copula model differ significantly (at the $5\%$ significance level) from those obtained under the independence model. For instance, the median survival time is estimated at $27.17$ months with the proposed model, which is $1.25$ months shorter than the estimate obtained under the independence assumption.

%\begin{table}[!h]
%\centering
%\fontsize{10}{10}\selectfont
%\small
%\scriptsize
%\begin{tabular}{l|ccc}
% Stage & Q1$_U$ & Q2$_U$ & Q3$_U$ \\
% \hline
% Localized & [-159.95, 137.93] & [-532.53, 489.37] & [-1444.03, 1370.73] %\\ 
% Regional & [-7.45, 3.67] & [-26.56, 18.32] & [-73.58, 58.57] \\ 
% Distant & [0.23, 0.87] & [0.69, 1.82] & [1.28, 3.51] \\
%\end{tabular}
%\smallskip
%\caption{\textcolor{red}{Décider si on l'intègre au document final.}}
%\label{tblICsBootstrap}
%\end{table}

\begin{table}[!h]
\centering
%\fontsize{10}{10}\selectfont
\small
%\scriptsize
\begin{tabular}{l|ccc}
 Stage & Q1$_U$ & Q2$_U$ & Q3$_U$ \\
 \hline
 %Localized & [-219.86, 11.01] & [-601.39, 30.86] & [-1362.40, 72.58] \\ 
 %Regional & [-2.76, 1.98] & [-6.39, 5.39] & [-12.32, 12.21] \\ 
 Distant & [0.40, 1.13] & [0.96, 2.33] & [1.94, 4.06] \\
\end{tabular}
\smallskip
\caption{$95\%$ CI for the differences between quartiles of the survival time for patient with a distant-stage cancer.}
\label{tblICsQunatiles}
\end{table}

%%%%%%%%%%%%%%%%%%%%%%%%%%%%%%%%%%%%%%%%%%%%%%%%%%%%%%%%%%%%%%%%%%%%%
\section{Discussion and future research} \label{sect6Conclu}
%%%%%%%%%%%%%%%%%%%%%%%%%%%%%%%%%%%%%%%%%%%%%%%%%%%%%%%%%%%%%%%%%%%%%

In conclusion, the model proposed in this paper constitutes a substantial advancement in the literature on cure models in presence of dependent censoring. By explicitly modelling the dependence between survival and censoring times more accurately and incorporating covariate information, the proposed approach enables more precise and individualised estimation of the cure fraction, the dependence structure, and other clinically relevant quantities.

Several avenues for future research may be considered. In particular, the incorporation of variable-selection procedures, via, e.g., a penalised likelihood, could facilitate the identification of the most relevant covariates and further improve the practical applicability of the model.

%%%%%%%%%%%%%%%%%%%%%%%%%%%%%%%%%%
\section*{Declarations}
%%%%%%%%%%%%%%%%%%%%%%%%%%%%%%%%%%

%%%
\subsection*{Funding}
%%%

Morine Delhelle, Anouar El Ghouch and Ingrid Van Keilegom acknowledge the support of the ARC project (Projet d'Actions de Recherche Concertées) `Imperfect data : From mathematical foundations to applications in life sciences' of the `Communauté fran\c{c}aise de Belgique', granted by the `Académie universitaire Louvain' (2020-2025).

%%%
\subsection*{Conflict of interest}
%%%

The authors declare that they have no conflict of interest.

%%%%%%%%%%%%%%%%%%%%%%%%%%%%%%%%%%
\section*{Acknowledgment}
%%%%%%%%%%%%%%%%%%%%%%%%%%%%%%%%%%

Computational resources have been provided by the supercomputing facilities of the Université catholique de Louvain (CISM/UCL) and the Consortium des Équipements de Calcul Intensif en Fédération Wallonie Bruxelles (CÉCI) funded by the Fond de la Recherche Scientifique de Belgique (F.R.S.-FNRS) under convention 2.5020.11 and by the Walloon Region. ChatGPT has been used for language improvements.

%%%%%%%%%%%%%%%%%%%%%%%%%%%%%%%%%%
%\section*{Supplementary Material}
%%%%%%%%%%%%%%%%%%%%%%%%%%%%%%%%%%

%%%%%%%%%%%%%%%%%%%%%%%%%%
%\section*{References}
%%%%%%%%%%%%%%%%%%%%%%%%%%

\bibliographystyle{dcu}

\bibliography{References}

%%%%%%%%%%%%%%%%%%%%%%%%%%%%%%%%%%
\newpage
\section*{Appendix A : Further details} \label{appCoefs}
%%%%%%%%%%%%%%%%%%%%%%%%%%%%%%%%%%

This appendix contains all the coefficients and link functions used for the simulations and data analysis sections.

\subsection*{Coefficients for Subsection \ref{subsect4Perf}}

Here are the values of all the coefficients used to generate the data in each scenario in Subsection \ref{subsect4Perf}.

\begin{table}[!h]
\centering
%\fontsize{7}{7}\selectfont
\small
%\scriptsize
\begin{tabular}{r|rrrrrrrrrrrr}
  & $\tau_0$ & $\tau_1$ & $p_0$ & $p_1$ & $\beta_{U_{0}}$ & $\beta_{U_{1}}$ & $\beta_{U_{2}}$ & $\beta_{U_{3}}$ & $\beta_{C_{0}}$ & $\beta_{C_{1}}$ & $\beta_{C_{2}}$ & $\beta_{C_{3}}$ \\
  \hline
 1 - 4 - 8 - 9 & 1.95 & & 0.41 & 0.7 & 0 & 0.7 & -0.69 & 0.7 & 0.5 & 0.7 & -1.39 & 0.7 \\
 2 & 0.51 & & 0.41 & 0.7 & 0 & 0.7 & -0.69 & 0.7 & 0.5 & 0.7 & -1.39 & 0.7 \\
 3 & -1.95 & & 0.41 & 0.7 & 0 & 0.7 & -0.69 & 0.7 & 0.5 & 0.7 & -1.39 & 0.7 \\
 5 & 1.10 & & 0.41 & 0.7 & 0 & 0.7 & -0.69 & 0.7 & 0.5 & 0.7 & -1.39 & 0.7 \\
 6 & 1.95 & & 0.41 & 0.7 & 0.5 & 0.7 & 0.2 & 0.7 & 0.7 & 0.7 & 0.7 & 0.7 \\
 7 & 1.95 & & 0.41 & 0.4 & 0 & 0.4 & -0.69 & 0.4 & 0.5 & 0.4 & -1.39 & 0.4 \\
  10 & 1.95 & 0.7 & 0.41 & 0.7 & 0 & 0.7 & -0.69 & 0.7 & 0.5 & 0.7 & -1.39 & 0.7 \\
\end{tabular}
\smallskip
\caption{Coefficients fixed for the simulation study.}
\label{Coefs41}
\end{table}

\subsection*{Coefficients for Subsection \ref{subsect4Robustness}}

Table \ref{Coefs42} shows the values of the coefficients used to generate the data for the robustness assessment of the method.

\begin{table}[!h]
\centering
%\fontsize{7}{7}\selectfont
\small
%\scriptsize
\begin{tabular}{r|rrrrrrrrrrr}
  & $\tau_0$ & $p_0$ & $p_1$ & $\beta_{U_{0}}$ & $\beta_{U_{1}}$ & $\beta_{U_{2}}$ & $\beta_{U_{3}}$ & $\beta_{C_{0}}$ & $\beta_{C_{1}}$ & $\beta_{C_{2}}$ & $\beta_{C_{3}}$ \\
  \hline
 Joe - Truncated Weibull & 1.1 & 0.41 & 0.7 & 0.5 & 0.7 & 0.2 & 0.4 & 0.7 & 0.5 & 0.7 & 0.5 \\
 Joe - Non-truncated log-normal & 1.1 & 0.41 & 0.7 & 0 & 0.7 & -0.69 & 0.7 & 0.5 & 0.7 & -1.39 & 0.7 \\
\end{tabular}
\smallskip
\caption{Coefficients fixed for the study of the robustness.}
\label{Coefs42}
\end{table}

\subsection*{Link functions for sections \ref{sect4Simus} and \ref{sect5Data}}

The list below shows the link functions defined for each model parameter used in sections \ref{sect4Simus} and \ref{sect5Data}.

\begin{itemize}
    \item Kendall's tau for Frank and Gaussian copulas : $g(y)= \frac{\exp(y)-1}{\exp(y)+1}$
    \item Incidence and Kendall's tau for Gumbel, Joe and Clayton(180) copulas : $g(y) = \frac{\exp{\left(y\right)}}{1+\exp{\left(y\right)}}$
    \item Kendall's tau for Clayton(90) and Clayton(270) copulas : $g(y) = -\frac{\exp{\left(y\right)}}{1+\exp{\left(y\right)}}$
    \item Parameters for Weibull and Gamma distributions, and standard deviation of the logarithm in the log-normal distribution: $g(y)=\exp(y)$
    \item Mean of the logarithm in the log-normal distribution: $g(y)=y$
\end{itemize}

%%%%%%%%%%%%%%%%%%%%%%%%%%%%%%%%%%
\newpage
\section*{Appendix B : Proofs} \label{appProofs}
%%%%%%%%%%%%%%%%%%%%%%%%%%%%%%%%%%

This appendix contains the proofs for the results presented in this document.

We will start with the developments that led to the individual likelihood contributions in Subsection \ref{subsect3Ident}.

\begin{proof}[Proof of individual likelihood contributions]

First, let's compute the cumulative distribution functions:
\begin{align*}
    \Proba(Y \le y, \Delta_T=1, \Delta_C=0| X=x) &= \Proba(T \le y, T \le C, T \le A| X=x) \\
    &= \int_{0}^{y} \Proba(C \ge t, A \ge t | X=x, T=t)f_T(t|x) dt \\
    &= \int_{0}^{y} \Proba(C \ge t | X=x, T=t)\Proba(A \ge t)f_T(t|x) dt \\
    &= \int_{0}^{y} \{1-F_{C|T,X}(t|t,x)\}\{1-F_A(t)\}f_T(t|x) dt \\
    &= \int_{0}^{y} [1-h^{1}\{F_C(t|x)
|F_T(t|x),x\}]\{1-F_A(t)\}f_T(t|x) dt.
\end{align*}
Similarly,
\begin{align*}
    \Proba(Y \le y, \Delta_T=0, \Delta_C=1| X=x) &= \Proba(C \le y, C \le T, C \le A| X=x) \\
    &= \int_{0}^{y} [1-h^{2}\{F_T(c|x)|F_C(c|x),x\}]\{1-F_A(c)\}f_C(c|x) dc,
\end{align*}
and 
\begin{align*}
    \Proba(Y \le y, \Delta_T=0, \Delta_C=0| X=x) &= \Proba(A \le y, A \le T, A \le C| X=x) \\
    &= \int_{0}^{y} \Proba(T \ge a, C \ge a| X=x, A=a)f_{A}(a) da \\
    &= \int_{0}^{y} \Proba(T \ge a, C \ge a| X=x)f_{A}(a) da \\
    &= \int_{0}^{y} \overline{\C}\left(1-F_T(a|x), 1-F_C(a|x)\big|x\right)f_{A}(a) da.
\end{align*}

Then we take the derivatives of these expressions to obtain
\begin{align*}
    \frac{\partial}{\partial y} \Proba(Y \le y, \Delta_T=1, \Delta_C=0| X=x) &= \pi(x)f_U(y|x) \left[1-h^{1}\left\{F_C(y|x)\big|\pi(x)F_U(y|x),x\right\}\right]\{1-F_A(y)\},
\end{align*}
\begin{align*}
    \frac{\partial}{\partial y} \Proba(Y \le y, \Delta_T=0, \Delta_C=1| X=x) &= f_C(y|x) \left[1-h^{2}\left\{\pi(x)F_U(y|x)\big|F_C(y|x),x\right\}\right]\{1-F_A(y)\},
\end{align*}
and
\begin{align*}
    \frac{\partial}{\partial y} \Proba(Y \le y, \Delta_T=0, \Delta_C=0| X=x) = \overline{\C}\left(1-\pi(x)F_U(y|x), 1-F_C(y|x)\big|x\right)f_A(y).
\end{align*}
\end{proof}

We then introduce a lemma that will be used below to derive the equality of coefficients from the equality of model parameters. The proof is not provided as it is straightforward.

\begin{Lemma}\label{IdentAlpha}
    Let $g$ be a one-to-one function and $X$ a vector such that the components of $X$ are linearly independent. Then $g\left(\alpha^\top X\right)=g\left(\tilde\alpha^\top X\right)$ almost surely implies $\alpha=\tilde\alpha$.
\end{Lemma}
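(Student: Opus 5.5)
The claim is that $g(\alpha^\top X)=g(\tilde\alpha^\top X)$ almost surely forces $\alpha=\tilde\alpha$. The plan is to strip off $g$ using injectivity, and then turn the resulting almost-sure linear identity into equality of coefficients using linear independence of the components of $X$.

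\textbf{Step 1 (remove the link).} Let $N$ be the null event on which $g(\alpha^\top X)\neq g(\tilde\alpha^\top X)$. Off $N$, the function $g$ takes the same value at $\alpha^\top X$ and at $\tilde\alpha^\top X$. Since $g$ is one-to-one, these arguments coincide. Hence $(\alpha-\tilde\alpha)^\top X=0$ almost surely. If $\theta$ is vector-valued with a separate link for each component, as for the Weibull shape and scale, the same argument applies componentwise, one coefficient vector at a time.

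\textbf{Step 2 (use linear independence).} Write $\gamma=\alpha-\tilde\alpha$. We need to read ``the components of $X$ are linearly independent'' in the sense appropriate for random variables: the only deterministic $\gamma\in\mathbb{R}^m$ with $\gamma^\top X=0$ almost surely is $\gamma=0$. Recall that $X$ contains the constant $1$, so this is the same as saying the support of $X$ is not contained in any affine hyperplane. An equivalent formulation is that $E(XX^\top)$ is positive definite, whenever second moments exist.

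With this reading, Step 1 gives $\gamma=0$ directly, that is, $\alpha=\tilde\alpha$. If one prefers the moment formulation, multiply $\gamma^\top X=0$ by $X$ and take expectations to get $E(XX^\top)\gamma=0$. Invertibility of $E(XX^\top)$ then gives $\gamma=0$. A route that avoids moments works just as well: choose $m$ points $x_1,\dots,x_m$ in the support of $X$ such that the matrix with rows $x_j^\top$ is nonsingular, which is possible by the non-degeneracy assumption. Since $\gamma^\top x=0$ on a set of full measure, and if necessary by continuity in $x$, we get $\gamma^\top x_j=0$ for each $j$, hence $\gamma=0$.

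\textbf{Main obstacle.} The only delicate point is Step 2. We must make precise what linear independence of random components means, and we must handle the ``almost surely'' qualifier. In particular, with a binary covariate the support is finite, and we need enough distinct support points to span $\mathbb{R}^m$. I would state the non-degeneracy condition explicitly as above, and the rest is immediate.
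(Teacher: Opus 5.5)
Your argument is correct and is exactly the straightforward reasoning the paper has in mind when it omits the proof. Injectivity of $g$ gives $(\alpha-\tilde\alpha)^\top X=0$ almost surely, and linear independence of the components of $X$, understood as ``$\gamma^\top X=0$ a.s.\ only for $\gamma=0$'', then forces $\alpha=\tilde\alpha$. One small slip: because $X$ contains the constant $1$, its support always lies in the affine hyperplane where that coordinate equals $1$, so your geometric reformulation should read ``the support of $X$ is not contained in any hyperplane through the origin'' (equivalently, the non-constant covariates are not almost surely confined to an affine hyperplane).
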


Next, here are the proofs of Theorem \ref{IdentMod} and Theorem \ref{IdentA6A7}, which concern the identifiability of the model.

\begin{proof}[Proof of Theorem \ref{IdentMod}]
For the combination of assumptions A1 to A5, the proof follows from Theorem 1 in \cite{DelhelleVK}, which states that if these assumptions are verified, then the parameters of the model are identified for each fixed $x$. Since the link functions are one-to-one and the components of $X$ are linearly independent, all coefficients in $\beta$ are directly identified by applying Lemma \ref{IdentAlpha}%\\
%Let us analyse the proof for the incidence parameter in more detail. Theorem 1 in \cite{DelhelleVK} gives that $\pi_p(X) = \pi_{\tilde p}(X)$, or equivalently $\exp{\left(p^\top X\right)}/(1+\exp{\left(p^\top X\right)}) = \exp{\left(\tilde p^\top X\right)}/(1+\exp{\left(\tilde p^\top X\right)})$. Since the logistic function is one-to-one, applying Lemma \ref{IdentAlpha} directly leads to $\pi_p=\pi_{\tilde p}$. A completely equivalent line of reasoning can be used for all parameters of the model, which implies the identification of each coefficient and concludes the proof of identifiability.

Regarding the combination of assumptions A1 to A3 with A6 and A7, let's assume that, for all $(y, \delta_T, \delta_C, x)$, the equality $l(\beta; y, \delta_T, \delta_C, x) = l(\tilde\beta; y, \delta_T, \delta_C, x)$ holds.
    
If $\delta_C=1$ we have, for all $(y, x)$,
\begin{equation}
    f_{\beta_{C}}(y|x)\left[1-h_{\tau}^{2}\left\{\pi_p(x)F_{\beta_{U}}(y|x)|F_{\beta_{C}}(y|x),x\right\}\right]= f_{\tilde\beta_{C}}(y|x)\left[1-h_{\tilde\tau}^{2}\left\{\pi_{\tilde p}(x)F_{\tilde\beta_{U}}(y|x)|F_{\tilde\beta_{C}}(y|x),x\right\}\right], \label{eq:deltaC1}
\end{equation}
and hence it follows from A2 that
$$\lim_{y\rightarrow 0} \frac{f_{\beta_{C}}(y|x)}{f_{\tilde\beta_{C}}(y|x)}=1 \hspace{0.1cm}\forall x \quad \text{or} \quad \lim_{y\rightarrow \infty} \frac{f_{\beta_{C}}(y|x)}{f_{\tilde\beta_{C}}(y|x)}=1\hspace{0.1cm}\forall x.$$
Thanks to assumption A1 we obtain $\beta_C=\tilde\beta_C$ and equation \eqref{eq:deltaC1} becomes 
\begin{equation}
	h_{\tau}^{2}\left\{\pi_p(x)F_{\beta_{U}}(y|x)|F_{\beta_{C}}(y|x),x\right\}=h_{\tilde\tau}^{2}\left\{\pi_{\tilde p}(x)F_{\tilde\beta_{U}}(y|x)|F_{\beta_{C}}(y|x),x\right\}. \label{eq:deltaC1bis}
\end{equation}

If $\delta_T=1$ we have, for all $(y, x)$,
\begin{align*}
    & \pi_p(x)f_{\beta_{U}}(y|x)\left[1-h_{\tau}^{1}\left\{F_{\beta_{C}}(y|x)|\pi_p(x)F_{\beta_{U}}(y|x),x\right\}\right] \nonumber \\
    & \quad\quad = \pi_{\tilde p}(x)f_{\tilde\beta_{U}}(y|x)\left[1-h_{\tilde\tau}^{1}\left\{F_{\beta_{C}}(y|x)|\pi_{\tilde p}(x)F_{\tilde\beta_{U}}(y|x),x\right\}\right], %\label{eq:deltaT1}
\end{align*}
it then follows from A3 that
$$ \lim_{y\rightarrow 0} \frac{\pi_p(x)f_{\beta_{U}}(y|x)}{\pi_{\tilde p}(x)f_{\tilde\beta_{U}}(y|x)}=1 \hspace{0.1cm}\forall x.$$
Thanks to assumption A6 we obtain $\beta_U=\tilde\beta_U$ and $p=\tilde p$. Hence, equation \eqref{eq:deltaC1bis} becomes, for all $(y, x)$, 
\begin{equation*}
	h_{\tau}^{2}\left\{\pi_p(x)F_{\beta_{U}}(y|x)|F_{\beta_{C}}(y|x),x\right\}=h_{\tilde\tau}^{2}\left\{\pi_{p}(x)F_{\beta_{U}}(y|x)|F_{\beta_{C}}(y|x),x\right\}
\end{equation*}
which, thanks to assumption A7, directly yields $\tau=\tilde\tau$ and concludes the proof.
\end{proof}

\begin{proof}[Proof of Theorem \ref{IdentA6A7}(a)]
We show that assumption A6 holds for the log-normal distribution, whose density, for $\sigma >0$ and $-\infty < \mu < \infty$, is given by
$$f_{\sigma,\mu}(y) = \frac{1}{y\sigma\sqrt{2\pi}} \cdot \exp\left\{-\frac{1}{2}\left(\frac{\log{(y)}-\mu}{\sigma}\right)^2\right\}.$$
Since the distribution parameters depend on the covariates, we write them as $\sigma(x)$ and $\mu(x)$.

The limit of interest is
\begin{align*}
	\lim_{y \rightarrow 0}\frac{\pi_p(x) f_{\beta_{U}}(y|x)}{\pi_{\tilde p}(x) f_{\tilde\beta_{U}}(y|x)} &= \lim_{y \rightarrow 0}\frac{\pi_p(x)\tilde\sigma(x)}{\pi_{\tilde p}(x)\sigma(x)} \cdot \exp\left\{-\frac{1}{2}\left(\frac{\log{(y)}-\mu(x)}{\sigma(x)}\right)^2+\frac{1}{2}\left(\frac{\log{(y)}-\tilde\mu(x)}{\tilde\sigma(x)}\right)^2\right\} \\
    &= \lim_{y \rightarrow 0}\frac{\pi_p(x)\tilde\sigma(x)}{
    \pi_{\tilde p}(x)\sigma(x)} \cdot \exp\Bigg\{\log{(y)}^2\left(\frac{1}{2\tilde\sigma(x)^2}-\frac{1}{2\sigma(x)^2}\right)-\log{(y)}\left(\frac{\tilde\mu(x)}{\tilde\sigma(x)^2}-\frac{\mu(x)}{\sigma(x)^2}\right) \\
    & \hspace{10cm} +\frac{1}{2}\left(\frac{\tilde\mu(x)^2}{\tilde\sigma(x)^2}-\frac{\mu(x)^2}{\sigma(x)^2}\right)\Bigg\}.
\end{align*}
Since $\displaystyle\lim_{y\rightarrow 0}\log{(y)}=-\infty$ and the dominant term in the exponent is $\log{(y)}^2\left(\left(1/2\tilde\sigma(x)^2\right)-\left(1/2\sigma(x)^2\right)\right)$, the exponential term converges either to $0$ or $\infty$, depending on the sign of $\left(\left(1/2\tilde\sigma(x)^2\right)-\left(1/2\sigma(x)^2\right)\right)$. Therefore, the above limit can equal $1$ only if $1/2\tilde\sigma(x)^2=1/2\sigma(x)^2$ and $\tilde\mu(x)/\tilde\sigma(x)^2=\mu(x)/\sigma(x)^2$. The first equality implies that $\sigma(x)=\tilde\sigma(x)$. Substituting this
relation into the second equality yields $\mu(x)=\tilde\mu(x)$. The exponential term is then equal to $1$, which directly implies that $\pi_p(x)=\pi_{\tilde p}(x)$, as required. By Lemma \ref{IdentAlpha}, this is equivalent to the identifiability of the coefficients, namely $\beta_{U}=\tilde\beta_{U}$ and $p=\tilde p$ since the link functions are one-to-one and the components of $X$ are linearly independent.

Similar arguments establish assumption A6 for the log-student-t distribution.
\end{proof}

\begin{proof}[Proof of Theorem \ref{IdentA6A7}(b)]
We establish assumption A6 for the non-truncated Weibull distribution. The proof for the non-truncated Gamma distribution is very similar, has been verified and can be provided, on request, by the contact author.

The Weibull density, with scale and shape parameters depending on the covariates, $\lambda(x) >0$ and $k(x) >0$, is given by

$$f_{\beta_{U}}(y|x) = \frac{k(x)}{\lambda(x)}\cdot\left(\frac{y}{\lambda(x)}\right)^{k(x)-1}\cdot\exp\left\{-\left(\frac{y}{\lambda(x)}\right)^{k(x)}\right\}.$$

The limit of interest is

\begin{align*}
	\lim_{y \rightarrow 0} \frac{\pi_p(x) f_{\beta_{U}}(y|x)}{\pi_{\tilde p}(x) f_{\tilde\beta_{U}}(y|x)} &= \lim_{y \rightarrow 0} \frac{\frac{\pi_p(x)k(x)}{\lambda(x)}\cdot\left(\frac{y}{\lambda(x)}\right)^{k(x)-1}\cdot\exp\left\{-\left(\frac{y}{\lambda(x)}\right)^{k(x)}\right\}}{\frac{\pi_{\tilde p}(x)\tilde k(x)}{\tilde\lambda(x)}\cdot\left(\frac{y}{\tilde\lambda(x)}\right)^{\tilde k(x)-1}\cdot\exp\left\{-\left(\frac{y}{\tilde\lambda(x)}\right)^{\tilde k(x)}\right\}} \\
	%&= \lim_{y \rightarrow 0} \frac{\pi_p(x)k(x)\tilde\lambda(x)}{\pi_{\tilde p}(x)\tilde k(x)\lambda(x)}\cdot\left(\frac{y}{\lambda(x)}\right)^{k(x)-1}\cdot\left(\frac{\tilde\lambda(x)}{y}\right)^{\tilde k(x)-1}\cdot\exp\left\{-\left(\frac{y}{\lambda(x)}\right)^{k(x)}+\left(\frac{y}{\tilde\lambda(x)}\right)^{\tilde k(x)}\right\} \\
	&= \lim_{y\rightarrow 0} \frac{\pi_p(x)k(x)\tilde\lambda(x)^{\tilde k(x)}}{\pi_{\tilde p}(x)\tilde k(x)\lambda(x)^{k(x)}}\cdot y^{k(x)-\tilde k(x)}\cdot\exp\left\{-\left(\frac{y}{\lambda(x)}\right)^{k(x)}+\left(\frac{y}{\tilde\lambda(x)}\right)^{\tilde k(x)}\right\}.
\end{align*}

This limit can equal $1$ only if $k(x)=\tilde k(x)$. Moreover, since $\displaystyle\lim_{z\rightarrow 0}\exp{(z)}=1$, it is further required that $\big(\pi_p(x)\tilde\lambda(x)^{k(x)}\big)/\big(\pi_{\tilde p}(x)\lambda(x)^{k(x)}\big)=1$. By hypothesis, this equality holding for all $x$ is equivalent to $\beta_{U}=\tilde\beta_{U} \text{ and } p=\tilde p$. Therefore, assumption A6 is satisfied.
\end{proof}

\begin{proof}[Proof of Theorem \ref{IdentA6A7}(c)]
First of all, it should be noted that we will hereafter use the usual copula dependence parameter $\theta$ which is related to Kendall's tau by the following formula:
$$\tau^{K}(x) = 4 \int_0^1\int_0^1 \C_{\theta}(u, v|x) c_{\theta}(u, v|x) dudv -1,$$
where $c_{\theta}(u, v|x)$ is the copula density. Furthermore, all well-known copula families admit relationships between $\tau^{K}$ and $\theta$, allowing estimation of one from the other; see \cite{Nelsen} for some examples.
%An advantage of using Kendall’s tau instead of the native copula parameter $\theta$ is that the former is more readily interpretable. The domain of $\tau$ is [-1, 1], whereas $\theta$ belongs to a copula-specific domain. This parametrisation also simplifies calibration and simulation procedures.

Let us prove assumption A7 for the Frank copula. Proofs for each of the other copulas mentioned in the theorem have been checked by the contact author and can be provided on request to any interested reader.

Since the dependency parameter $\theta$ is in bijection with $\tau^{K}$, it is also impacted by the covariates and is denoted $\theta(x)$. For $\theta(x)\in(-\infty,\infty)\backslash\{0\}$,
$$\C_{\tau}(u, v|x)=-\frac{1}{\theta(x)}\log\left( 1+\frac{\left(e^{-\theta(x) u}-1\right)\left(e^{-\theta(x) v}-1\right)}{e^{-\theta(x)}-1} \right),$$
and
$$h_{\tau}^{2}\left(u|v,x\right)=\frac{\partial}{\partial v}\C_{\tau}(u, v|x)=\frac{e^{-\theta(x) v}\left( e^{-\theta(x) u}-1 \right)}{\left( e^{-\theta(x)}-1 \right)+\left( e^{-\theta(x) u}-1 \right)\left( e^{-\theta(x) v}-1 \right)}.$$
The equality $h_{\tau}^{2}\left\{\pi_p(x)F_{\beta_{U}}(y|x)|F_{\beta_{C}}(y|x),x\right\}=h_{\tilde\tau}^{2}\left\{\pi_{p}(x)F_{\beta_{U}}(y|x)|F_{\beta_{C}}(y|x),x\right\}$ is then
\begin{align*}
    & \frac{e^{-\theta(x) F_{\beta_{C}}(y|x)}\left( e^{-\theta(x)\pi_p(x)F_{\beta_{U}}(y|x)}-1 \right)}{\left( e^{-\theta(x)}-1 \right)+\left( e^{-\theta(x)\pi_p(x)F_{\beta_{U}}(y|x)}-1 \right)\left( e^{-\theta(x) F_{\beta_{C}}(y|x)}-1 \right)} \nonumber \\
    & \hspace{5cm} = \frac{e^{-\tilde\theta(x) F_{\beta_{C}}(y|x)}\left( e^{-\tilde\theta(x)\pi_p(x)F_{\beta_{U}}(y|x)}-1 \right)}{\left( e^{-\tilde\theta(x)}-1 \right)+\left( e^{-\tilde\theta(x)\pi_p(x)F_{\beta_{U}}(y|x)}-1 \right)\left( e^{-\tilde\theta(x) F_{\beta_{C}}(y|x)}-1 \right)}.
\end{align*}
Taking the limit when $y$ tends to $\infty$ leads to
$$\frac{e^{-\theta(x)}\left( e^{-\theta(x)\pi_p(x)}-1 \right)}{\left( e^{-\theta(x)}-1 \right)+\left( e^{-\theta(x)\pi_p(x)}-1 \right)\left( e^{-\theta(x)}-1 \right)}=\frac{e^{-\tilde\theta(x)}\left( e^{-\tilde\theta(x)\pi_p(x)}-1 \right)}{\left( e^{-\tilde\theta(x)}-1 \right)+\left( e^{-\tilde\theta(x)\pi_p(x)}-1 \right)\left( e^{-\tilde\theta(x)}-1 \right)}$$
which is equivalent to
$$\frac{e^{-\theta(x)}\left( e^{-\theta(x)\pi_p(x)}-1 \right)}{e^{-\theta(x)\pi_p(x)}\left( e^{-\theta(x)}-1 \right)}=\frac{e^{-\tilde\theta(x)}\left( e^{-\tilde\theta(x)\pi_p(x)}-1 \right)}{e^{-\tilde\theta(x)\pi_p(x)}\left( e^{-\tilde\theta(x)}-1 \right)} \Leftrightarrow \frac{1- e^{\theta(x)\pi_p(x)}}{1-e^{\theta(x)}}=\frac{1-e^{\tilde\theta(x)\pi_p(x)}}{1-e^{\tilde\theta(x)}}.$$
Since this equality need to be verified for all $x$, and that for every fixed $x$ the function $g(z) = (1-e^{z\pi_p(x)})/(1-e^{z})$ is strictly decreasing, we deduce that $\theta(x)$ must be equal to $\tilde\theta(x)$ for all $x$. The bijection between $\theta(x)$ and $\tau^{K}(x)$ implies that $\tau^{K}_{\tau}(x)=\tau^{K}_{\tilde\tau}(x)$ for all $x$. Applying Lemma \ref{IdentAlpha} yields $\tau=\tilde\tau$ as requested.
\end{proof}

Finally, here is the proof of Remark 3.1, which allows to verify the condition stated in Theorem \ref{IdentA6A7}(b).

\begin{proof}[Proof of Remark \ref{RmkIdent}]
%Since the direction $\beta_{U}=\tilde\beta_{U} \text{ and } p=\tilde p \Longrightarrow \frac{\pi_p(x)\tilde\lambda(x)^{k(x)}}{\pi_{\tilde p}(x)\lambda(x)^{k(x)}}=1 \quad\forall x$ is obvious, the point is to prove that $\frac{\pi_p(x)\tilde\lambda(x)^{k(x)}}{\pi_{\tilde p}(x)\lambda(x)^{k(x)}}=1 \quad\forall x$ implies $\beta_{U}=\tilde\beta_{U} \text{ and } p=\tilde p$.

%Lets' assume that ...

%--------------------------

 %Let’s now check that this equation holds true if the shape parameter does not take the form $$k(x)=\frac{\log\left(\frac{\pi_{\tilde p}(x)}{\pi_p(x)}\right)}{\log\left(\frac{\tilde\lambda(x)}{\lambda(x)}\right)} \quad\text{for any } x.$$
The following evidence pertains to the case of a continuous covariate. When the covariate is binary, the model is not identifiable in this scenario.

We have the following equivalences
\begin{align*}
	\frac{\pi_p(x)\tilde\lambda(x)^{k(x)}}{\pi_{\tilde p}(x)\lambda(x)^{k(x)}}=1 \quad\forall x &\Longleftrightarrow \left(\frac{\tilde\lambda(x)}{\lambda(x)}\right)^{k(x)} = \frac{\pi_{\tilde p}(x)}{\pi_p(x)} \quad\forall x \\
	&\Longleftrightarrow k(x)\log\left(\frac{\tilde\lambda(x)}{\lambda(x)}\right) = \log\left(\frac{\pi_{\tilde p}(x)}{\pi_p(x)}\right) \quad\forall x. %\\
	%&\Longleftrightarrow \left\{\begin{aligned}
%\log\left(\frac{\tilde\lambda(x)}{\lambda(x)}\right) &= 0 %\quad\text{for }x\in X_A\\
%\log\left(\frac{\tilde\lambda(x)}{\lambda(x)}\right) &\neq 0 %\quad\text{for }x\in X_B
%\end{aligned}\right. \quad\text{with }X_A\cup X_B=[-1,1]\\
%&\Longleftrightarrow \left\{\begin{aligned}
%\tilde\lambda(x) &= \lambda(x) \quad\text{for }x\in X_A\\
%k(x) &= \frac{\log\left(\frac{\pi_{\tilde p}(x)}{\pi_p(x)}\right)}{\log\left(\frac{\tilde\lambda(x)}{\lambda(x)}\right)} \quad\text{for }x\in X_B
%\end{aligned}\right. \quad\text{with }X_A\cup X_B=[-1,1].
\end{align*}
Depending on the values taken by the covariates, there are only two possibilities. Either
$$\log\left(\frac{\tilde\lambda(x)}{\lambda(x)}\right) = 0 \Longleftrightarrow \tilde\lambda(x) = \lambda(x),$$
or
$$\log\left(\frac{\tilde\lambda(x)}{\lambda(x)}\right) \neq 0 \Longleftrightarrow k(x) = \frac{\log\left(\frac{\pi_{\tilde p}(x)}{\pi_p(x)}\right)}{\log\left(\frac{\tilde\lambda(x)}{\lambda(x)}\right)}.$$
Assuming that
$ k(x) \neq \frac{\log\left(\frac{\pi_{\tilde p}(x)}{\pi_p(x)}\right)}{\log\left(\frac{\tilde\lambda(x)}{\lambda(x)}\right)} \quad\forall x$
is then equivalent to say that $\tilde\lambda(x)=\lambda(x)$ for all $x$. The initial equality becomes
$$\frac{\pi_p(x)\lambda(x)^{k(x)}}{\pi_{\tilde p}(x)\lambda(x)^{k(x)}}=1 \quad\forall x \quad\Longleftrightarrow\quad \pi_p(x)=\pi_{\tilde p}(x) \quad\forall x.$$
Applying lemma \ref{IdentAlpha} directly leads to the identification of each coefficients, $\beta_{U}=\tilde\beta_{U}$ and $p=\tilde p$, since the link functions are one-to-one and the components of $X$ are linearly independent.

\end{proof}

\end{document}